\newcounter{protocol}
\newenvironment{protocol}[1][htb]{%
  \let\c@algorithm\c@protocol
  \renewcommand{\ALG@name}{Protocol}
  \begin{algorithm}[#1]%
  }{\end{algorithm}
}
\crefname{procedure}{proc.}{procs.}
\Crefname{procedure}{Procedure}{Procedures}
\crefname{figure}{Figure}{Figures}
\Crefname{figure}{Figure}{Figures}
\newcommand\blfootnote[1]{%
  \begingroup
  \renewcommand\thefootnote{}\footnote{#1}%
  \addtocounter{footnote}{-1}%
  \endgroup
}
\newtheorem{theorem}{Theorem}
\newtheorem{lemma}[theorem]{Lemma}
\newtheorem{claim}[theorem]{Claim}
\newtheorem{fact}[theorem]{Fact}
\def\withnotes{1}
    \newcommand{\todo}[1]{\textcolor{blue}{TODO: #1}}
    \newcommand{\Chris}[1]{\textcolor{brown}{Chris: #1}}
    \newcommand{\chris}[1]{\textcolor{brown}{Chris: #1}}
    \newcommand{\theertha}[1]{\textcolor{red}{theertha: #1}}
    \newcommand{\majid}[1]{\textcolor{magenta}{Majid: #1}}
  \newcommand{\todo}[1]{\xspace}
  \newcommand{\Chris}[1]{\xspace}
  \newcommand{\chris}[1]{\xspace}
  \newcommand{\theertha}[1]{\xspace}
  \newcommand{\majid}[1]{\xspace}
\newcommand{\CSD}[1]{Drafter-Invariant Speculative Decoding}
\newcommand{\DISD}[1]{Drafter-Invariant Speculative Decoding}
\newcommand{\eqdef}{\mathbin{\stackrel{\rm def}{=}}}
\newcommand{\customfontsize}{\fontsize{8}{10}\selectfont}
\DeclareMathOperator{\argmin}{argmin}
\DeclareMathOperator{\argmax}{argmax}
\DeclareMathOperator{\mexp}{Exp}
\newcommand{\tv}{\mathrm{TV}}
\definecolor{lightblue}{rgb}{0.25, 0.41, 0.88}
\definecolor{lightred}{rgb}{0.88, 0.1, 0.3}
\title{Coupling without Communication \\
and Drafter-Invariant Speculative Decoding}
\author{
Majid Daliri\\New York University\\ \texttt{daliri.majid@nyu.edu}
\and
Christopher Musco\\New York University\\ \texttt{cmusco@nyu.edu}
\and
Ananda Theertha Suresh\\Google Research, NY\\ \texttt{theertha@google.com}
}
  \newcommand{\cAAAI}[1]{AAAI\ Conference\ on\ Artificial (AAAI)}
\begin{document}

\maketitle
\thispagestyle{empty}


\begin{abstract}
Suppose Alice has a distribution $\mathcal{P}$ and Bob has a distribution $\mathcal{Q}$. Alice wants to draw a sample $a\sim \mathcal{P}$ and Bob a sample $b \sim \mathcal{Q}$ such that $a = b$ with as high of probability as possible. It is well-known that, by sampling from an optimal coupling between the distributions, Alice and Bob can achieve $\Pr[a = b] = 1 - D_{\tv}(\mathcal{P},\mathcal{Q})$, where  $D_{\tv}(\mathcal{P},\mathcal{Q})$ is the total variation distance between $\mathcal{P}$ and $\mathcal{Q}$.

What if Alice and Bob must solve this same problem \emph{without communicating at all?} Perhaps surprisingly, with access to public randomness, they can still achieve $\Pr[a = b] \geq \frac{1 - D_{\tv}(\mathcal{P},\mathcal{Q})}{1 + D_{\tv}(\mathcal{P},\mathcal{Q})} \geq 1-2D_{\tv}(\mathcal{P},\mathcal{Q})$ using a simple protocol based on the Weighted MinHash algorithm. This bound was shown to be optimal in the worst-case by Bavarian, Ghazi, Haramaty, Kamath, Rivest, and Sudan [ToC 2020]. 


In this work, we revisit the ``communication-free coupling'' problem. We provide a simpler proof of the optimality result from [Bavarian et al., 2020]. Moreover we show that, while the \emph{worst-case} success probability of Weighted MinHash cannot be improved, an equally simple protocol based on Gumbel sampling offers a Pareto improvement: for every pair of distributions $\mathcal{P}$ and $\mathcal{Q}$,  Gumbel sampling achieves an equal or higher value of $\Pr[a = b]$ than Weighted MinHash.

Importantly, this improvement translates to practice. We demonstrate an application of communication-free coupling to \emph{speculative decoding}, a recent method for accelerating autoregressive large language models [Leviathan, Kalman, Matias, ICML 2023]. 
We show that communication-free protocols can be used to contruct \emph{\CSD{}} schemes, which have the desirable property that their output is fixed given a fixed random seed, regardless of what drafter is used for speculation. In experiments on a language generation task, Gumbel sampling outperforms Weighted MinHash.
Code is available at \url{https://github.com/majid-daliri/DISD}.

Finally, we study the coupling problem in the setting where communication is \emph{bounded}, rather than completely eliminated. We describe a protocol that uses just $O(\log(n/\epsilon))$ bits of communication to achieve $\Pr[a = b] = 1 - D_{\tv}(\mathcal{P},\mathcal{Q}) - \epsilon$, i.e. to essentially match optimal coupling.
\end{abstract}.

\pagenumbering{arabic} 

\blfootnote{We thank Cheuk Ting Li for bringing several related prior works to our attention after we initially published this paper. \cite{angel2021pairwiseoptimalcouplingmultiple} prove the same result that we do on the Gumbel sampling method, showing that it offers a Pareto improvement over Weighted MinHash. Moreover, in addition to our work and the cited work by \cite{BavarianGhaziHaramaty:2020}, \cite{li2019pairwisemultimarginaloptimaltransport,li2021unified} also prove that $\frac{1 - D_{\tv}(\mathcal{P},\mathcal{Q})}{1 + D_{\tv}(\mathcal{P},\mathcal{Q})}$ is optimal in the worst-case for communication-free coupling.
}

\newpage
\section{Introduction}
\label{sec:intro}
The goal of this paper is to revist a ``communication-free coupling'' problem that arises repeatedly in computer science, and was studied explicitly by \cite{BavarianGhaziHaramaty:2020} under the name \emph{correlated sampling}.
Suppose we have two cooperating parties, Alice and Bob. Alice has a distribution $\mathcal{P}$ and Bob has a distribution $\mathcal{Q}$, which, for simplicity, we assume are discrete distributions over the set $\{1, \ldots, n\}$. The goal is for Alice to sample $a\in \{1, \ldots, n\}$ according to $\mathcal{P}$ and Bob to sample $b\in \{1, \ldots, n\}$ according to $\mathcal{Q}$, in such a way that we maximize the probability that $a=b$. We are interested in how well Alice and Bob can solve this problem \emph{without communicating with each other}. We do allow them access to a source of shared random numbers. 

If communication was allowed, it is well known that Alice and Bob can achieve:
\begin{align}\label{eq:optimum_communication}
    \Pr[a = b] = 1 - D_{\tv}(\mathcal{P},\mathcal{Q}),
\end{align}
where $D_{\tv}(\mathcal{P},\mathcal{Q})$ is the total variation distance between $\mathcal{P}$ and $\mathcal{Q}$ \citep[Section 4.3]{wu2020information}. 
To do so, they construct an optimal coupling between the distributions, i.e., a distribution $\mathcal{C}$ over $\{1, \ldots, n\}\times \{1, \ldots, n\}$ with marginals $\mathcal{P}$ and $\mathcal{Q}$ that maximizes $\Pr_{(a,b)\sim \mathcal{C}}[a=b]$. In fact, the total variation distance is defined as $ \max_{\text{couplings } \mathcal{C}} \Pr_{(a,b)\sim \mathcal{C}}[a\neq b]$, so \eqref{eq:optimum_communication} is optimal by definition, with or without communication. 

While an optimal coupling between $\mathcal{P}$ and $\mathcal{Q}$ has a simple closed form (see \Cref{sec:prelims} for details), sampling from the coupling requires knowledge of \emph{both} $\mathcal{P}$ and $\mathcal{Q}$. I.e., sampling from an optimal coupling requires communication between Alice and Bob. However, it turns out that it is possible to sample from a \emph{near-optimal} coupling without any communication at all. In particular, the starting point of our work is a straightforward observation from \cite{BavarianGhaziHaramaty:2020}, which is implicit in prior work \citep{ManasseMcSherryTalwar:2010}:  
\begin{fact}
\label{fact:protocol}
There is a communication-free protocol (Protocol \ref{prot:minhash}) by which Alice samples $a\sim \mathcal{P}$ and Bob $b\sim \mathcal{Q}$ such that:\vspace{-1em}
\begin{align}
\label{eq:optimum_no_communication}
\Pr[a = b] \geq \frac{1 - D_{\tv}(\mathcal{P},\mathcal{Q})}{1 + D_{\tv}(\mathcal{P},\mathcal{Q})}.
\end{align}
\end{fact}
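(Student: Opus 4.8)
The plan is to analyze Weighted MinHash (Protocol~\ref{prot:minhash}) through an elementary geometric picture. Identify each distribution with the region under its histogram inside the box $\{1,\dots,n\}\times[0,1]$: let $A = \{(i,y) : y \le \mathcal{P}(i)\}$ and $B = \{(i,y) : y \le \mathcal{Q}(i)\}$, and let $\mu$ be the measure that is uniform on the first coordinate and Lebesgue on the second (so $\mu$ has total mass $n$). Then $\mu(A) = \mu(B) = 1$, $\mu(A\cap B) = \sum_i \min(\mathcal{P}(i),\mathcal{Q}(i)) = 1 - D_{\tv}(\mathcal{P},\mathcal{Q})$, and $\mu(A\cup B) = \sum_i \max(\mathcal{P}(i),\mathcal{Q}(i)) = 1 + D_{\tv}(\mathcal{P},\mathcal{Q})$. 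In these terms the protocol reads: from the shared randomness draw a common i.i.d.\ stream $p_1, p_2, \dots \sim \Unif(\{1,\dots,n\}\times[0,1])$; Alice outputs the first coordinate of the first $p_j$ lying in $A$, and Bob the first coordinate of the first $p_j$ lying in $B$. First I would record the (routine) marginal check: conditioned on a given point being the first to land in $A$, its first coordinate equals $i$ with probability $\mu(\{i\}\times[0,\mathcal{P}(i)])/\mu(A) = \mathcal{P}(i)$, independently of the earlier points, so Alice's output has law $\mathcal{P}$ (and the stream hits $A$ eventually since $\mu(A) > 0$); symmetrically for Bob.

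The heart of the argument is the key observation: \emph{if the first stream point that lands in $A\cup B$ actually lands in $A\cap B$, then $a = b$.} Indeed, such a point lies in $A$ while all earlier points lie in neither $A$ nor $B$, so it is simultaneously the first point in $A$ and the first point in $B$; hence Alice and Bob both output its index. Therefore
\[
\Pr[a = b] \;\ge\; \Pr\big[\text{the first stream point in } A\cup B \text{ lies in } A\cap B\big].
\]
Since the points are i.i.d.\ and each lands in $A\cup B$ with the same probability, conditioning on the first one that does, it lies in $A\cap B$ with probability exactly $\mu(A\cap B)/\mu(A\cup B) = \frac{1 - D_{\tv}(\mathcal{P},\mathcal{Q})}{1 + D_{\tv}(\mathcal{P},\mathcal{Q})}$, giving \eqref{eq:optimum_no_communication}. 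The secondary bound $\frac{1-t}{1+t} \ge 1 - 2t$ for $t\in[0,1]$ is immediate from $(1-2t)(1+t) = 1 - t - 2t^2 \le 1-t$.

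The one point needing care — and the reason the conclusion is an inequality rather than an equality — is that the event in the key observation is sufficient but not necessary for $a = b$: two distinct stream points can share a first coordinate $i$ (one in $A\setminus B$ arriving early, another in $B$ arriving later), so Alice and Bob can agree even when the first point in $A\cup B$ falls outside $A\cap B$. Since this only increases $\Pr[a=b]$, the lower bound is unaffected. The remaining obstacles are purely bookkeeping: almost-sure termination of the stream (immediate from $\mu(A),\mu(B) > 0$) and making the conditioning in the marginal and agreement computations precise using independence across stream steps.
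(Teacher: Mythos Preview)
Your proposal is correct and takes essentially the same approach as the paper: the paper's Case~1 analysis (the first dart landing in $\bigcup_j[j-1,j-1+\max(p_j,q_j)]$ actually lands in $\bigcup_j[j-1,j-1+\min(p_j,q_j)]$) is exactly your ``first stream point in $A\cup B$ lies in $A\cap B$'' event, and both compute its probability as $\sum_i\min(p_i,q_i)/\sum_i\max(p_i,q_i)$. The only difference is that the paper continues with a Case~2 analysis to pin down the exact collision probability, whereas you stop at the lower bound since that is all \Cref{fact:protocol} asserts.
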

Since $0 \leq D_{\tv}(\mathcal{P},\mathcal{Q}) \leq 1$, \eqref{eq:optimum_no_communication} is always larger than $1 - 2D_{\tv}(\mathcal{P},\mathcal{Q})$. We conclude that it is possible to almost match the optimum in \eqref{eq:optimum_communication} without communication. \Cref{fact:protocol} follows from an analysis of the popular ``Weighted MinHash'' method \citep{KleinbergTardos:2002, Holenstein:2009,ManasseMcSherryTalwar:2010,Ioffe:2010,Shrivastava:2016,Christiani:2020}. We provide a sharp analysis of the method (with an exact expression for $\Pr[a = b]$) in \Cref{sec:protocols}, \Cref{clm:minhash}.

\subsection{Motivation}
The main question we address in this work is if it is possible to improve even further on the excellent performance of the existing Weighted MinHash method for communication-free coupling. To cut to the point: it is. We show that a simple method based on the ``Gumbel Max-Trick'' provides a Pareto improvement over Weighted MinHash. 
Before discussing this result in detail, however, we provide some motivation for studying communication-free coupling in such depth. It turns out that this problem has a direct application to a technique called \emph{speculative decoding} that has gained traction for its ability to accelerate inference in autoregressive language models \citep{LeviathanKalmanMatias:2023,ChenBorgeaudIrving:2023, SunSureshRo:2023}, one of the most important algorithmic challenges in modern machine learning \citep{HoffmannBorgeaudMensch:2022, zhou2024survey}. 

Details of speculative decoding are given in \Cref{sec:experiments}. In short, the method speeds up language models by using a small and inexpensive neural network to draft (i.e., speculate) the tokens that will be generated by a larger state-of-the-art network. Tighter coupling between the token distributions of the two networks leads to higher chance of draft tokens matching the larger model tokens, which 
leads to accelerated inference.

However, speculative decoding suffers from one major and subtle drawback: if the small ``drafter'' neural network changes (i.e., because of an update to improve the model), the use of an optimal coupling means that tokens generated by the state-of-the-art neural network will change. This can be potentially problematic in applications of autoregressive language models, where the expected and desired behavior is that the output of a model is \emph{fixed given a fixed random seed.} 
This behavior allows for researchers and other users to reliably reproduce results and leads to easier unit testing and debugging. 

As we will demonstrate in \Cref{sec:experiments}, communication-free coupling allows for the implementation of a \emph{\CSD} method that avoids this issue. In particular, the output of the state-of-the-art model is completely independent of what drafter network is used to accelerate inference: it is fixed given a fixed random seed. Intuitively this follows from the fact that in any communication-free coupling protocol, Alice and Bob's samples are necessarily independent when conditioned on their shared randomness.

\subsection{Our Contributions}
\label{sec:contributions}
Motivated by this application, our work seeks to understand if it is possible to improve on the existing Weighted MinHash method for communication-free coupling. Increasing $\Pr[a = b]$ from $\frac{1 - D_{\tv}(\mathcal{P},\mathcal{Q})}{1 + D_{\tv}(\mathcal{P},\mathcal{Q})}$ closer to the optimal $1- D_{\tv}(\mathcal{P},\mathcal{Q})$ could mean up to a 2x reduction in wrong guesses during speculative decoding, which would translate to a potential 2x reduction in computational depth during language generation. 

Unfortunately, improving on the bound in \Cref{fact:protocol} is not possible in the worst-case. As shown in an elegant proof by \cite{BavarianGhaziHaramaty:2020}, no protocol can do better than $\frac{1 - D_{\tv}(\mathcal{P},\mathcal{Q})}{1 + D_{\tv}(\mathcal{P},\mathcal{Q})}$ in the worst-case. We restate a version\footnote{The result in \cite{BavarianGhaziHaramaty:2020} is slightly stronger than our \Cref{clm:ub} in that it allows for any total variation distance, not just distances of the form $1/d$ for integer $d$. We assume distance of this form to simplify our argument.} of this result below and provide an even more compact proof in \Cref{sec:upper_bound}.

\begin{restatable}[]{theorem}{clmub}
\label{clm:ub}
Consider any protocol that takes as input a distribution and source of public randomness. For any positive integer $d$, there are distributions $\mathcal{P}$ and $\mathcal{Q}$ with total variation distance $D_{\tv}(\mathcal{P}, \mathcal{Q}) = 1/d$ such that, if Alice runs the protocol to sample $a\sim \mathcal{P}$ and Bob runs the protocol to sample $b\sim \mathcal{Q}$, then:
\begin{align*}
\Pr[a = b] \leq \frac{1 - 1/d}{1 + 1/d}.
\end{align*}
\end{restatable}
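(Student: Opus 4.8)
The plan is to construct an explicit hard instance using a family of "rotated" distributions and argue by a symmetrization/averaging argument that no protocol can do well on all of them simultaneously. Fix the integer $d$ and work over the ground set $\{1, \ldots, d\}$ (or a slightly larger set). For each $i \in \{1, \ldots, d\}$ let $\mathcal{Q}_i$ be the distribution that is uniform on $\{1, \ldots, d\}$ except it places extra mass on coordinate $i$: concretely, take $\mathcal{P} = \Unif(\{1,\ldots,d\})$ to be Alice's fixed distribution, and let $\mathcal{Q}_i$ put mass $0$ on $i$ and mass $\frac{1}{d-1}\cdot\frac{d-1}{d} = \frac{1}{d}\cdot\frac{d}{d-1}$... more cleanly: let $\mathcal{Q}_i$ be uniform on $\{1,\ldots,d\}\setminus\{i\}$, so each remaining atom has mass $\frac{1}{d-1}$. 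Then $D_{\tv}(\mathcal{P}, \mathcal{Q}_i) = \frac{1}{d}$ for every $i$. The key symmetry is that the whole family $\{\mathcal{Q}_i\}$ is permuted transitively by the symmetric group acting on coordinates, while $\mathcal{P}$ is fixed by that action.

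The core of the argument is then an averaging step. Suppose some protocol achieves $\Pr[a = b] \ge p$ on every pair $(\mathcal{P}, \mathcal{Q}_i)$. Alice's sample $a$ depends only on $\mathcal{P}$ and the public randomness $r$; call it $a = A(r)$, a fixed random variable once $r$ is drawn. Bob's sample, on input $\mathcal{Q}_i$, is $b_i = B(\mathcal{Q}_i, r)$. The success condition is $\Pr_r[A(r) = B(\mathcal{Q}_i, r)] \ge p$. Now average over $i$ drawn uniformly from $\{1,\ldots,d\}$: this gives $\E_i \Pr_r[A(r) = B(\mathcal{Q}_i, r)] \ge p$. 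Swapping the expectations, $\E_r\left[\Pr_i[A(r) = B(\mathcal{Q}_i, r)]\right] \ge p$, so there exists a fixed $r^\star$ with $\Pr_i[A(r^\star) = B(\mathcal{Q}_i, r^\star)] \ge p$. But for this fixed $r^\star$, $a^\star = A(r^\star)$ is a single fixed element of $\{1,\ldots,d\}$, and the condition becomes: for at least a $p$-fraction of indices $i$, we have $B(\mathcal{Q}_i, r^\star) = a^\star$. The point is to now derive a contradiction with the fact that $B(\mathcal{Q}_i, r^\star) \sim \mathcal{Q}_i$ marginally (over the residual private randomness, if any), and $\mathcal{Q}_i$ assigns mass $0$ — or more generally bounded mass — to certain coordinates, so $a^\star$ cannot be hit too often across the family.

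The main obstacle — and the place the calculation actually has to be done carefully — is getting the constant exactly right, i.e. $\frac{1-1/d}{1+1/d}$ rather than just $1 - O(1/d)$. A cleaner variant that I expect nails the constant: take $\mathcal{Q}_i$ to also be a mixture, $\mathcal{Q}_i = (1-\tfrac1d)\mathcal{P} + \tfrac1d \delta_{j_i}$ for auxiliary distinct points $j_i$ outside $\{1,\ldots,d\}$, so that $D_{\tv}(\mathcal{P},\mathcal{Q}_i) = 1/d$ and the distributions overlap maximally on $\mathcal{P}$. Using exchangeability of the construction, one fixes $r$ and considers Bob's output distribution; because Bob must output $b_i \sim \mathcal{Q}_i$ and the $\mathcal{Q}_i$ are symmetric copies of one another, a counting argument on how many $i$'s can share the same target value $a^\star$ yields exactly the stated bound. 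I would also double-check the edge behavior at $d = 1$ (where the bound is $0$ and $D_{\tv} = 1$, trivially true) and confirm the reduction handles protocols with private randomness by conditioning on it before the averaging step.
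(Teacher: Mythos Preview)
Your proposal has a genuine gap: neither of your two constructions can yield the tight constant $\tfrac{d-1}{d+1}$. In both cases you fix Alice's distribution $\mathcal{P}$ and vary only Bob's distribution over a symmetric family $\{\mathcal{Q}_i\}$. But for your first construction ($\mathcal{P}=\Unif([d])$, $\mathcal{Q}_i=\Unif([d]\setminus\{i\})$), the Gumbel protocol of the paper achieves $\Pr[a=b_i]=1-1/d$ \emph{simultaneously for every $i$}: plugging into the paper's exact expression, each term in the sum over $j\neq i$ equals $1/d$, giving $(d-1)/d$. The same computation shows Gumbel attains $1-1/d$ on every pair in your second construction as well. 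Hence no averaging argument over these families can prove a bound below $1-1/d$, which is strictly larger than $\tfrac{d-1}{d+1}$ for all $d\ge 2$. Your own ``fix $r^\star$'' argument confirms this: for a given $r$, the only forced failure is $b_{a(r)}(r)\neq a(r)$, yielding at most $d-1$ successes out of $d$, i.e.\ the same $1-1/d$.

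The paper's missing idea is to let \emph{both} sides vary. It takes $d+1$ distributions $\mathcal{P}_1,\ldots,\mathcal{P}_{d+1}$, where $\mathcal{P}_i=\Unif([d+1]\setminus\{i\})$ (essentially your $\mathcal{Q}_i$ with a shifted parameter), and considers all $\binom{d+1}{2}$ pairs $(\mathcal{P}_i,\mathcal{P}_j)$, each having $D_{\tv}=1/d$. For any fixed randomness the outputs $x_1,\ldots,x_{d+1}$ satisfy $x_i\neq i$, so they cannot all be equal; once two differ, at least $d$ of the $\binom{d+1}{2}$ equalities $x_i=x_j$ fail. Averaging over pairs then gives $\Pr[x_i=x_j]\le 1-\tfrac{d}{\binom{d+1}{2}}=\tfrac{d-1}{d+1}$ for some pair. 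The crucial difference from your approach is that the hard pair is not of the form (fixed $\mathcal{P}$, some $\mathcal{Q}_i$) but rather $(\mathcal{P}_i,\mathcal{P}_j)$ with both inputs drawn from the symmetric family.
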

Importantly, while \Cref{clm:ub} rules out an improvement on Weighted MinHash for \emph{all}  distribution pairs, it does not mean that the method is optimal. Our main contribution is to demonstrate that another extremely simple protocol offers a \emph{Pareto improvement} over Weighted MinHash: for {any} $\mathcal{P},\mathcal{Q}$ it performs at least as well, but sometimes performs better (often significantly so in practice). 
Our alternative approach is  based on the so-called ``Gumbel Max-Trick'' or ``Gumbel sampling''. The method can be described in one sentence: using shared random variables $u_1, \ldots, u_n$ drawn uniformly from $[0,1]$, Alice returns $a = \argmin_{i\in \{1, \ldots, n\}} \frac{-\ln(u_i)}{p_i}$ \footnote{Often $ \argmin_{i\in \{1, \ldots, n\}} \frac{-\ln(u_i)}{p_i}$ is computed as $\argmax_{i\in \{1, \ldots, n\}} \ln p_i - \ln(-\ln(u_i))$. If $u$ is a uniform random variable in $[0, 1]$, then $-\ln(-\ln(u))$ follows a Gumbel distribution \citep{gumbel1935valeurs} and hence the name.} and Bob returns $b = \argmin_{i\in \{1, \ldots, n\}} \frac{-\ln(u_i)}{q_i}$, where $p_i$ and $q_i$ denote the respective probabilities that Alice and Bob's distributions, $\mathcal{P}$ and $\mathcal{Q}$, assign to item $i$. Intuitively, $a$ and $b$ will be partially coupled because, if some $u_i$ happens to be small, we are more likely to set both $a = i$ and $b = i$. 

Gumbel sampling has been widely studied in theoretical computer science under different names, including PPSWOR and bottom-k sampling \citep{Cohen:1997,Rosen:1997,Cohen:2023}. Moreover, it is already widely used in machine learning for uncoupled sampling from discrete distributions, including for auto-regressive language generation~\citep{kool2019stochastic}. So, using the method for speculative decoding would require essentially no code changes \citep{MaddisonTarlowMinka:2014,HuijbenKoolPaulus:2023}. 
We prove the following bound on the performances of Gumbel sampling in \Cref{sec:protocols}:

\begin{restatable}[]{theorem}{clmgumbel}
For distributions $\mathcal{P}$ and $\mathcal{Q}$, let  $C_{\text{WMH}}(\mathcal{P},\mathcal{Q})$ be the probability that samples $a\sim \mathcal{P}$ and $b\sim \mathcal{Q}$ drawn using the Weighted MinHash method (Protocol \ref{prot:minhash}) satisfy $a = b$. 
The communication-free Gumbel coupling method (Protocol \ref{prot:gumbel}) samples $a\sim \mathcal{P}$ and $b\sim \mathcal{Q}$ such that:
\begin{align*}
    \Pr[a=b] = \sum_{\substack{j\in \{1, \ldots, n\}\\\min(p_j,q_j) > 0}} \frac{1}{\sum_{i=1}^n \max(p_i/p_j, q_i/q_j)} \geq C_{\text{WMH}}(\mathcal{P},\mathcal{Q}) \geq \frac{1 - D_{\tv}(\mathcal{P}, \mathcal{Q})}{1 + D_{\tv}(\mathcal{P}, \mathcal{Q})}.
\end{align*}
\label{clm:gumbel}
\vspace{-.5em}
\end{restatable}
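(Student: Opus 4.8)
The plan is to establish the three inequalities in turn; the last, $\Pr[a=b]\ge\frac{1-D_{\tv}(\mathcal P,\mathcal Q)}{1+D_{\tv}(\mathcal P,\mathcal Q)}$, is immediate from the second together with \Cref{fact:protocol} (equivalently, from $\sum_j\min(p_j,q_j)(1+|p_j-q_j|)\ge\sum_j\min(p_j,q_j)=1-D_{\tv}$). For the exact formula I would first reparametrize the public randomness by $E_i\eqdef-\ln u_i$, so that $E_1,\dots,E_n$ are i.i.d.\ $\mexp(1)$ and Protocol~\ref{prot:gumbel} has Alice output $a=\argmin_i E_i/p_i$ and Bob output $b=\argmin_i E_i/q_i$; items of probability $0$ receive an infinite key and are a.s.\ never selected, and ties occur with probability $0$. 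Fix $j$ with $\min(p_j,q_j)>0$. Since $a=j$ iff $E_i/p_i>E_j/p_j$ for every $i\ne j$ and $b=j$ iff $E_i/q_i>E_j/q_j$ for every $i\ne j$, the event $\{a=b=j\}$ equals $\bigcap_{i\ne j}\{E_i>E_j\max(p_i/p_j,q_i/q_j)\}$. Conditioning on $E_j=t$, these $n-1$ events are mutually independent and independent of $E_j$, each of probability $e^{-t\max(p_i/p_j,q_i/q_j)}$, so $\Pr[a=b=j\mid E_j=t]=e^{-t(S_j-1)}$, where $S_j\eqdef\sum_{i=1}^n\max(p_i/p_j,q_i/q_j)$ (whose $i=j$ term equals $1$). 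Integrating against the $\mexp(1)$ density of $E_j$ gives $\Pr[a=b=j]=\int_0^\infty e^{-tS_j}\,dt=1/S_j$, and summing over the valid $j$ yields the stated closed form; for $j$ with $\min(p_j,q_j)=0$ one has $S_j=\infty$ and $\Pr[a=b=j]=0$, consistent with omitting such $j$.

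For $\Pr[a=b]\ge C_{\text{WMH}}(\mathcal P,\mathcal Q)$ I would prove the stronger term-by-term bound that $1/S_j$ dominates the $j$-th summand of the exact expression for $C_{\text{WMH}}$ from \Cref{clm:minhash}, i.e.\ $\frac{1}{S_j}\ge\frac{\min(p_j,q_j)(1+|p_j-q_j|)}{1+D_{\tv}(\mathcal P,\mathcal Q)}$ for each $j$ with $\min(p_j,q_j)>0$; summing over $j$ then gives the claim. By swapping $\mathcal P\leftrightarrow\mathcal Q$ if necessary, assume $p_j\ge q_j$ and put $\alpha=q_j/p_j\in(0,1]$. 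Rearranging the target as $1+D_{\tv}\ge q_j(1+p_j-q_j)\,S_j$ and using $q_j S_j=\sum_i\max(\alpha p_i,q_i)$, $\sum_i\max(\alpha p_i,q_i)=1+\sum_i(\alpha p_i-q_i)^+$, $\sum_i\max(p_i,q_i)=1+D_{\tv}$, and $p_j-q_j=(1-\alpha)p_j$, the inequality reduces after elementary algebra to
\[
\sum_{i:\,p_i>q_i}\min\!\bigl((1-\alpha)p_i,\ p_i-q_i\bigr)\ \ge\ (1-\alpha)\,p_j\Bigl(1+\sum_i(\alpha p_i-q_i)^+\Bigr).
\]
I would then verify this by an explicit matching on the left-hand sum: the $i=j$ term equals $(1-\alpha)p_j$ and covers the ``$1$'' on the right; for each $i\ne j$ with $\alpha p_i>q_i$ — note $j$ itself is excluded, since $\alpha p_j=q_j$ — the $i$-th term equals $(1-\alpha)p_i$ (as $\alpha p_i>q_i$ forces the minimum) and $(1-\alpha)p_i\ge(1-\alpha)p_j(\alpha p_i-q_i)$ because $p_i\ge\alpha p_j\,p_i\ge p_j(\alpha p_i-q_i)$; the remaining left-hand terms are nonnegative and only help.

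The exponential-race computation in the first part is routine, so the crux is the term-by-term comparison. The ideas that make it work are: comparing the Gumbel and Weighted MinHash probabilities coordinate by coordinate rather than in aggregate; handling the two regimes $p_j\ge q_j$ and $q_j\ge p_j$ by the $\mathcal P\leftrightarrow\mathcal Q$ swap together with the substitution $\alpha=q_j/p_j$; and reducing to the displayed inequality and then proving it by the exact matching above (a coarser estimate, e.g.\ one using only the $i=j$ term of the left-hand sum, is not enough, since $\sum_i\max(\alpha p_i,q_i)$ can exceed $1$). I expect finding this reduction-plus-matching to be the main obstacle; the rest is bookkeeping.
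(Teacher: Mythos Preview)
Your proposal is correct and follows essentially the same approach as the paper: the exponential-race computation for the exact formula is identical, and for the Pareto comparison both you and the paper prove the term-by-term inequality $\frac{1}{S_j}\ge\frac{\min(p_j,q_j)(1+|p_j-q_j|)}{1+D_{\tv}}$ by normalizing with the ratio $q_j/p_j$ and reducing to the same algebraic inequality. The only cosmetic differences are that the paper takes the opposite convention ($p_j\le q_j$, $k=q_j/p_j\ge1$) and finishes the reduced inequality by bounding the two cases $p_i\gtrless q_i/k$ separately and summing, whereas you phrase the same case split as an explicit matching; both arguments hinge on the observation that the index $j$ itself sits on the boundary $\alpha p_j=q_j$, which is what makes the bookkeeping close.
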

It is not hard to find distributions for which the first inequality above is strict, i.e., for which Gumbel sampling strictly improves on Weighted MinHash. In fact, as discussed in \Cref{sec:protocols}, this will \emph{always} be the case under the mild condition that $p_i \neq q_i$ for at least three values of $i$.  
As an example, consider distributions $\mathcal{P} = \{{1}/{2}, {1}/{2}, 0\}$ and $\mathcal{Q} = \{{1}/{3}, {1}/{3}, {1}/{3}\}$ over $\{1,2,3\}$. The best possible collision probability for these distributions is $1-\mathcal{D}_{TV}(\mathcal{P}, \mathcal{Q}) = {2}/{3}$ and $\frac{1 - D_{\tv}(\mathcal{P}, \mathcal{Q})}{1 + D_{\tv}(\mathcal{P}, \mathcal{Q})} = 1/2$. In this case, Gumbel sampling actually obtains the optimal collision probability of ${2}/{3}$, while Weighted MinHash obtains probability $7/12$, closer to the worst-case bound for communication-free methods.
A similar pattern emerges in our experiments with real-word next-token distributions that arise in language generation \Cref{sec:experiments}. Gumbel sampling often nearly matches the optimal coupling, obtaining collision probability close to $1-\mathcal{D}_{TV}(\mathcal{P}, \mathcal{Q})$, while Weighted MinHash performs somewhat worse. 

It is interesting to ask if additional improvement is possible. Is there another protocol that offers a Pareto improvement over Gumbel sampling, or is the approach Pareto optimal? I.e., does improvement for a pair of distributions $(\mathcal{P},\mathcal{Q})$ necessarily require worse performance for another pair?

Considering the worst-case bound of \Cref{clm:ub}, it is also natural to ask what is possible if we \emph{restrict}, but do not entirely eliminate communication. How many bits of communication between Alice and Bob  are required to ensure that $\Pr[a=b] = 1-D_{\tv}(\mathcal{P},\mathcal{Q}) -\epsilon$, i.e., to nearly match what is possible with optimal coupling? A baseline is to simply discretize $\mathcal{P}$ or $\mathcal{Q}$, communicate the entire distribution, and compute a near optimal coupling. 
This takes $O(n\log(n/\epsilon))$ bits of communication (see \Cref{sec:lowcomm} for details). We show that it is possible to get by with much less. In \Cref{sec:lowcomm} we prove:

\begin{restatable}[]{theorem}{lowcomm}
\label{clm:lowcomm}
There is a protocol that, for any $\epsilon \in (0,1)$, requires a constant number of rounds and $O(\log(n/\epsilon))$ bits of communication, in expectation, between Alice and Bob to produce samples $a\sim \mathcal{P}$, $b\sim \mathcal{Q}$ such that:\vspace{-1em}
\begin{align*}
    \Pr\left[a = b\right] \geq 1 - D_{\tv}(\mathcal{P}, \mathcal{Q}) - \epsilon.
\end{align*}
\end{restatable}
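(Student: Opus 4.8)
The plan is to have Alice and Bob exchange just enough information to implement the \emph{conditional} form of the optimal coupling. Recall (see \Cref{sec:prelims}) that an optimal coupling of $\mathcal{P}$ and $\mathcal{Q}$ can be sampled as follows: draw $a\sim\mathcal{P}$; then, independently, set $b=a$ with probability $\min(1,q_a/p_a)$, and otherwise draw $b$ from the residual distribution $\mathcal{R}$ with $\mathcal{R}_j\propto (q_j-p_j)_+$. One checks that the marginal of $b$ is exactly $\mathcal{Q}$, that $\Pr[a=b]$ is at least the probability of the first branch, namely $\sum_j p_j\min(1,q_j/p_j)=\sum_j\min(p_j,q_j)=1-D_{\tv}(\mathcal{P},\mathcal{Q})$, and that the residual branch is entered with probability exactly $D_{\tv}(\mathcal{P},\mathcal{Q})$. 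Neither party can run this alone, since $\min(1,q_a/p_a)$ needs both $p_a$ and $q_a$ and sampling $\mathcal{R}$ needs both $\mathcal{P}$ and $\mathcal{Q}$; but each step touches only $O(1)$ coordinates of the distributions, which is what makes cheap communication possible.

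Concretely: (i) Alice samples $a\sim\mathcal{P}$ and sends Bob the index $a$ together with $\tilde p_a$, a rounding of $p_a$ to additive accuracy $\delta=\Theta(\epsilon/n)$; this costs $O(\log n)+O(\log(n/\epsilon))=O(\log(n/\epsilon))$ bits. (ii) Bob forms $\beta=\min(1,q_a/\tilde p_a)$ and, using public randomness, sets $b=a$ with probability $\beta$; if so, the parties halt. (iii) Otherwise Bob samples from an $\epsilon$-accurate version of $\mathcal{R}$ by rejection sampling: he repeatedly draws a proposal $j\sim\mathcal{Q}$ on his own, sends $j$ to Alice, receives $\tilde p_j$, and accepts $j$ with probability $(1-\tilde p_j/q_j)_+$, stopping at the first acceptance and outputting that $j$ as $b$. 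Up to the rounding slack the per-round acceptance probability is $\sum_j q_j(1-p_j/q_j)_+=\sum_j (q_j-p_j)_+=D_{\tv}(\mathcal{P},\mathcal{Q})$, so branch (iii) uses $1/D_{\tv}(\mathcal{P},\mathcal{Q})$ proposal rounds in expectation; since it is entered only with probability $D_{\tv}(\mathcal{P},\mathcal{Q})$, its \emph{unconditional} expected cost is $O(1)$ rounds and $O(\log(n/\epsilon))$ bits regardless of how small $D_{\tv}(\mathcal{P},\mathcal{Q})$ is, while for large $D_{\tv}(\mathcal{P},\mathcal{Q})$ each entry is itself cheap. Adding in steps (i) and (ii), the whole protocol uses $O(1)$ rounds and $O(\log(n/\epsilon))$ bits in expectation.

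For correctness the key estimate is that rounding perturbs things only by $O(\epsilon)$: for each $j$, replacing $p_j$ by $\tilde p_j$ with $|\tilde p_j-p_j|\le\delta$ changes the contribution $\min(p_j,q_j)$ to $\Pr[a=b]$ by at most $O(\delta)$, so $\Pr[a=b]\ge 1-D_{\tv}(\mathcal{P},\mathcal{Q})-O(n\delta)$, and choosing $\delta=\Theta(\epsilon/n)$ gives the claimed bound; the same accounting shows $b$'s output distribution is within total variation $O(\epsilon)$ of $\mathcal{Q}$. (One can push the marginal back to exactly $\mathcal{Q}$, at the price of a worst-case-unbounded but still expected-$O(\log n)$ subroutine, by replacing the rounded comparison in (ii)--(iii) with a ``reveal binary digits of the two quantities until the comparison is decided'' protocol.) The only mild wrinkle is coordinates with $p_j\lesssim\epsilon/n$, for which the ratio $q_j/\tilde p_j$ is meaningless; but these carry total $\mathcal{P}$-mass at most $\epsilon$, so the parties may fall back to an arbitrary valid rule there, changing $\Pr[a=b]$ by at most $\epsilon$.

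I expect the main obstacle to be exactly this reconciliation of \emph{real-valued} probabilities with a \emph{finite}, $D_{\tv}(\mathcal{P},\mathcal{Q})$-uniform communication budget: one must set the precision so that (a) every biased coin and acceptance test performed across branches (ii) and (iii)---and there may be $\Theta(1/D_{\tv}(\mathcal{P},\mathcal{Q}))$ of them---is simulated accurately enough that the summed total-variation error stays $O(\epsilon)$, while (b) the precision, hence each message length, stays $O(\log(n/\epsilon))$. The point that resolves this is that branch (iii) is entered with probability exactly $D_{\tv}(\mathcal{P},\mathcal{Q})$, so the ``$D_{\tv}$ small but branch (iii) long'' regime and the ``$D_{\tv}$ large but each entry short'' regime both collapse to an $O(\log(n/\epsilon))$ expected bound; everything else is routine bookkeeping rather than a new idea.
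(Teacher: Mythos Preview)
Your core idea matches the paper's: implement the optimal coupling of Protocol~\ref{prot:coupling} with minimal communication by (a) having Alice send only $a$ and $p_a$, and (b) if Bob rejects, sampling from the residual $\mathcal{R}$ via a back-and-forth rejection process whose expected length is $1/D_{\tv}(\mathcal{P},\mathcal{Q})$, so that the unconditional expected round count is $D_{\tv}\cdot(1/D_{\tv})=O(1)$. Your rejection sampler (propose $j\sim\mathcal{Q}$, accept with probability $(1-p_j/q_j)_+$) is equivalent to the paper's dart-throwing variant of the same step.

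The difference, and the gap in your write-up, is in how precision is managed. The theorem requires \emph{exact} marginals $a\sim\mathcal{P}$, $b\sim\mathcal{Q}$, but your on-the-fly rounding of $p_a,p_j$ only yields $b$ within $O(\epsilon)$ of $\mathcal{Q}$ in total variation; your parenthetical bit-revealing fix is plausible but not worked out, and your ``fall back to an arbitrary valid rule'' for tiny $p_j$ further perturbs Bob's marginal. The paper sidesteps all of this error tracking with a cleaner decomposition: Alice and Bob each \emph{first} round their own distribution to $\mathcal{P}',\mathcal{Q}'$ (every probability a multiple of $\Theta(\epsilon/n)$), run the idealized protocol \emph{exactly} on $\mathcal{P}',\mathcal{Q}'$ to obtain $a'\sim\mathcal{P}'$, $b'\sim\mathcal{Q}'$ with $\Pr[a'=b']=1-D_{\tv}(\mathcal{P}',\mathcal{Q}')$, and then each party \emph{locally} (no communication) couples $a'$ to $a\sim\mathcal{P}$ and $b'$ to $b\sim\mathcal{Q}$ via Protocol~\ref{prot:coupling}. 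A union bound over the three couplings gives $\Pr[a=b]\geq 1-D_{\tv}(\mathcal{P},\mathcal{Q})-\epsilon$ with exact marginals, and because the inner protocol is run on distributions whose probabilities are already $O(\log(n/\epsilon))$-bit numbers, there is no per-round rounding error to accumulate or bound.
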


While we are not aware of immediate applications of \Cref{clm:lowcomm}, it provides an interesting point of comparison to both the full-communcation and no-communcation settings.

Finally, beyond our theoretical results, we implement our Gumbel sampling method and evaluate its performance in the ``drafter-invariant speculative decoding'' task in \Cref{sec:experiments}. We show that the method  nearly matches standard speculative decoding in terms of speculation accuracy, but with the added benefit that the neural network output is completely invariant to the drafter network utilized.

\subsection{Related Work}
\label{sec:related_work}

Our work is closely related to results on weighted coordinated sampling methods, which generalize well-known techniques like MinHash and the $k$-minimum values sketch  for \emph{unweighted} coordinated sampling \citep{Broder:1997,BroderCharikarFrieze:1998,CohenKaplan:2007,BeyerHaasReinwald:2007,LiChurchHastie:2006}. The goal in weighted coordinated sampling is similar to ours: Alice and Bob hold vectors, $A$ and $B$, and seek to independently produce a subsample of entries from their vector so that 1) indices corresponding to larger entries in the vectors are sampled with higher probability and 2) Alice and Bob return many of the same indices. 
The aim is to use Alice and Bob's subsamples to estimate functions involving interactions between corresponding entries in $A$ and $B$, like the inner product $\langle A,B\rangle = \sum_{i=1}^n A_iB_i$ \citep{Li:2017,BessaDaliriFreire:2023,DaliriFreireMusco:2024}. To do so effectively, it is critical that we have access to pairs $(A_i,B_i)$ with the same index $i$.

Weighted coordinated sampling methods include Weighted MinHash (also referred to as ``consistent weighted sampling'') \citep{ManasseMcSherryTalwar:2010,Ioffe:2010,HaeuplerManasseTalwar:2014,WuLiChen:2020}, Gumbel sampling \citep{Cohen:2015,Cohen:2023}, threshold sampling \citep{Flajolet:1990,DuffieldLundThorup:2005}, priority sampling \citep{DuffieldLundThorup:2004,DaliriFreireMusco:2024b}, and more \citep{EstanNaughton:2006}. While all of these methods intuitively seek to generate samples from $A$ and $B$ that contain the same indices with high probability, that is typically not the final goal: the methods are analyzed for specific downstream applications. As such we are unaware of any prior work besides that of \cite{BavarianGhaziHaramaty:2020} which specifically obtains bounds for our ``communication-free coupling'' problem. 

We also note that several coordinated sampling methods cannot be applied in our context. For example, priority sampling does not ensure that entries from $A$ and $B$ are sampled with probability exactly proportional to specified probabilities (we require samples to truly be drawn from $\mathcal{P}$ and $\mathcal{Q}$), and threshold sampling does not return a fixed number of samples (we always need exactly one sample). Of the major methods, this leaves Weighted MinHash and Gumbel sampling, which are the two methods we address in this paper.

\section{Preliminaries}
\label{sec:prelims}
\noindent\textbf{Notation.}
Throughout the paper, we assume two parties, Alice and Bob, who hold discrete distributions $\mathcal{P}$ and $\mathcal{Q}$ over the set $\{1,\ldots, n\}$. We use $p_i$ to denote the probability that $x$ drawn from $\mathcal{P}$ equals $i$ and $q_i$ to denote the probability that $x$ drawn from $\mathcal{Q}$ equals $i$. We will sometimes write $\mathcal{P} = \{p_1, \ldots, p_n\}$ and $\mathcal{Q} = \{q_1, \ldots, q_n\}$ since the list of probabilities completely specifies the distributions.

\medskip\noindent\textbf{Total Variation Distance.}
The total variation distance, $D_{\tv}(\mathcal{P},\mathcal{Q})$, between discrete distributions equals:
\begin{align}
\label{eq:dtv_equiv}
D_{\tv}(\mathcal{P},\mathcal{Q}) = \frac{1}{2}\sum_{i=1}^n |p_i-q_i| = \sum_{i=1}^n \max(0,p_i-q_i) = \sum_{i=1}^n \max(0,q_i-p_i).
\end{align}
Throughout our proofs, we will use two elementary inequalities that follow from \eqref{eq:dtv_equiv}:
\begin{align}
1 - D_{\tv}(\mathcal{P},\mathcal{Q}) &= \sum_{i=1}^n p_i - \max(0,p_i-q_i) = \sum_{i=1}^n \min(p_i,q_i)\label{eq:sum_mins_dtv}\\
1 + D_{\tv}(\mathcal{P},\mathcal{Q}) &= \sum_{i=1}^n p_i + \max(0,q_i-p_i) = \sum_{i=1}^n \max(p_i,q_i)  \label{eq:sum_maxs_dtv}
\end{align}

\noindent
\textbf{Coupling with Communication.} As discussed in \Cref{sec:intro}, if they are allowed to communicate, Alice and Bob can easily sample from $a \sim \mathcal{P}$ and $b\sim \mathcal{Q}$ in such a way that $\Pr[a=b] = 1 - D_{\tv}(\mathcal{P},\mathcal{Q})$. Concretely, they can execute the following standard protocol:
\begin{protocol}
    \caption{Coupling with Communication \citep{LeviathanKalmanMatias:2023, ChenBorgeaudIrving:2023}}
        \textbf{Protocol for Alice} (who has probability vector $\mathcal{P} = [p_1, \ldots, p_n]$):
        \begin{algorithmic}[1]
            \State Sample $a \sim \mathcal{P}$. Communicate $a$ and the distribution $\mathcal{P}$ to Bob.
        \end{algorithmic}
         \textbf{Protocol for Bob}  (who has probability vector $\mathcal{Q} = [q_1, \ldots, q_n]$):
        \begin{algorithmic}[1]
            \State Await for $(a, \mathcal{P})$ from Alice. With probability $\min(1, q_a/p_a)$ set $b = a$. 
            \State Otherwise, sample $b \sim \mathcal{Q}'$, where $\mathcal{Q}' = \{q_1', \ldots, q_n'\}$ is a distribution with $q_i' = \frac{\max(0,q_i - p_i)}{\sum_{j=1}^n \max(0,q_j - p_j)}$.
        \end{algorithmic}
    \label{prot:coupling}
\end{protocol}

Clearly Protocol \ref{prot:coupling} ensures that $a\sim \mathcal{P}$. It is easily checked that it also ensures that $b\sim \mathcal{Q}$. 
In particular, for any $i$ where $q_i \leq p_i$, $\max(0,q_i-p_i) = 0$, so we can only set $b = i$ in Step 1 of the protocol. We do so with probability $\Pr[a = i]\cdot q_i/p_i = q_i$, as desired. If $q_i \geq p_i$, then with probability $p_i$, we set $b=i$ in Step 1 of the protocol (exactly when $a$ is set to $i$). 
There is also some chance we set $b = i$ in Step 2 of the protocol, which we execute with probability $\sum_{j=1}^n p_j\cdot (1 - \min(1,q_j/p_j)) = \sum_{j=1}^n \max(0, p_j- q_j) = \sum_{j=1}^n \max(0, q_j- p_j)$. So, the overall probability we set $b=i$ in the second step is $q_i'\cdot \sum_{j=1}^n \max(0, q_j- p_j) = q_i - p_i$. Thus, the total probability we set $b=i$ (in either step) is $q_i - p_i + p_i = q_i$, as desired.

Finally, we can see that $\Pr[a = b] = \sum_{j=1}^n p_j \cdot \min(1,q_j/p_j) = \sum_{j=1}^n \min(p_j,q_j) = 1 - D_{\tv}(\mathcal{P},\mathcal{Q})$ via \eqref{eq:sum_mins_dtv}.

\section{An Existing Communication-free Protocol}
\label{sec:protocols}
As discussed, a communication-free protocol based on the Weighted MinHash algorithm can nearly match the performance of Protocol \ref{prot:coupling} (which requires communication, since Bob needs to know Alice's distribution $\mathcal{P}$). A number of papers analyze Weighted MinHash, including \cite{BavarianGhaziHaramaty:2020}. We provide a self-contained analysis below for completeness, and to obtain a sharp bound (i.e., an exact expression for $\Pr[a = b]$). 

\subsection{Weighted MinHash}
Technically speaking, there are many different ways to implement Weighted MinHash, which might be more accurately described as a family of closely-related sampling method. See \citep{WuLiChen:2020} for an overview. 
We analyze a particularly simple implementation suggested by a number of authors \citep{KleinbergTardos:2002,Holenstein:2009,Shrivastava:2016}. It is detailed in Protocol \ref{prot:minhash} and illustrated in \Cref{fig:dart_setup}.

\begin{protocol}
    \caption{Weighted MinHash Coupling}
        Fix public random numbers $u_1, u_2, u_3 \ldots $ drawn uniformly from the interval $[0,n]$.\\
        \textbf{Protocol for Alice} (who has probability vector $\mathcal{P} = [p_1, \ldots, p_n]$):
        \begin{algorithmic}[1]
        \For{$k=1,2,\ldots$}
            \If{$u_k \in [j-1, j-1+p_j]$ for some $j\in \{1, \ldots, n\}$}
                 return $a = j$.
            \EndIf
        \EndFor
        \end{algorithmic}
         \textbf{Protocol for Bob}  (who has probability vector $\mathcal{Q} = [q_1, \ldots, q_n]$):
        \begin{algorithmic}[1]
        \For{$k=1,2,\ldots$}
            \If{$u_k \in [j-1, j-1+q_j]$ for some $j\in \{1, \ldots, n\}$}
                  return $b = j$.
            \EndIf
        \EndFor
        \end{algorithmic}
    \label{prot:minhash}
\end{protocol}

\begin{figure}[ht!]
    \centering
    \includegraphics[width=0.9\linewidth]{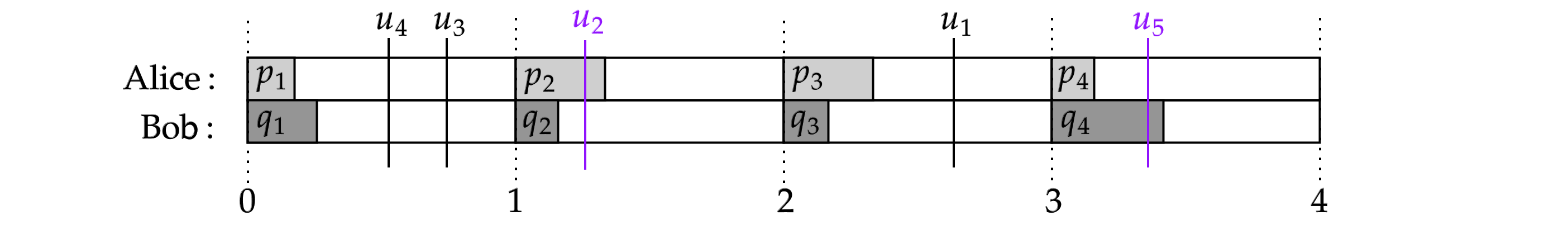}
    \caption{The Weighted MinHash method, described in Protocol \ref{prot:minhash}, selects a sample from a distributions on $n$ items with probabilities $p_1, \ldots, p_n$ by drawing a sequence of numbers $u_1, u_2, \ldots, $ uniformly at random from the interval $[0,n]$. The first time one of these numbers lands in a subinterval $[j-1, j-1+p_j]$ for any $j\in 1, \ldots, n$ , the protocol returns $j$. For example, in the illustration above, Alice returns $2$ (since $u_2$ falls within $[1, 1+p_2]$) and Bob returns $4$ (since $u_5$ falls within $[3, 3+p_4]$)}
    \label{fig:dart_setup}
\end{figure}

\begin{claim}
The communication-free Weighted MinHash coupling method (Protocol \ref{prot:minhash}) generates $a\sim \mathcal{P}$, $b\sim \mathcal{Q}$ such that:\vspace{-1em}
\begin{align*}
    \Pr[a=b] =\frac{1 - D_{\tv}(\mathcal{P}, \mathcal{Q}) + \sum_{i=1}^n |p_i-q_i| \min(p_i,q_i)}{1 + D_{\tv}(\mathcal{P}, \mathcal{Q})}  \geq \frac{1 - D_{\tv}(\mathcal{P}, \mathcal{Q})}{1 + D_{\tv}(\mathcal{P}, \mathcal{Q})}.
\end{align*}
\label{clm:minhash}
\end{claim}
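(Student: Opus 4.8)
The plan is to compute $\Pr[a = b]$ exactly by analyzing the infinite loop in Protocol~\ref{prot:minhash} conditioned on which iteration $k$ first produces a ``hit'' for either party. First I would establish the marginal correctness: in any single iteration, the probability that Alice's $u_k$ lands in $\bigcup_j [j-1, j-1+p_j]$ is $\frac{1}{n}\sum_j p_j = \frac{1}{n}$ (the intervals are disjoint because $[j-1,j-1+p_j]\subseteq[j-1,j]$), and conditioned on a hit the returned value is $j$ with probability $p_j$; an identical statement holds for Bob with $q_j$. Since iterations are i.i.d., Alice's output is distributed exactly as $\mathcal{P}$ and Bob's exactly as $\mathcal{Q}$.

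The key step is a renewal-style argument. Call iteration $k$ \emph{active} if $u_k$ lands in either Alice's region $\bigcup_j[j-1,j-1+p_j]$ or Bob's region $\bigcup_j[j-1,j-1+q_j]$; both parties effectively stop at the first active iteration (if Alice hits earlier than Bob she has already committed, but since the $u_k$ are shared, the first iteration active for Alice that is not active for Bob still gives Alice an answer — I need to be careful here). The cleaner framing: let $k^\*$ be the first iteration in which $u_{k^\*}$ lies in $\bigcup_j [j-1,j-1+\max(p_j,q_j)]$. By \eqref{eq:sum_maxs_dtv} this happens with per-iteration probability $\frac{1}{n}(1+D_{\tv})$, and on every earlier iteration $u_k$ lies in none of Alice's or Bob's subintervals (since $[j-1,j-1+p_j]\cup[j-1,j-1+q_j]=[j-1,j-1+\max(p_j,q_j)]$), so neither party has returned. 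Hence $a = b$ iff, in iteration $k^\*$, the value $u_{k^\*}$ lies in $[j-1, j-1+\min(p_j,q_j)]$ for the relevant $j$ — that is the overlap region where both parties return $j$. Conditioned on landing in $\bigcup_j[j-1,j-1+\max(p_j,q_j)]$, the probability of landing in the sub-overlap $\bigcup_j[j-1,j-1+\min(p_j,q_j)]$ is $\frac{\sum_j\min(p_j,q_j)}{\sum_j\max(p_j,q_j)} = \frac{1-D_{\tv}}{1+D_{\tv}}$ by \eqref{eq:sum_mins_dtv} and \eqref{eq:sum_maxs_dtv}. This already gives the lower bound, but it \emph{undercounts}: even when $u_{k^\*}$ lands in the ``max but not min'' part for some coordinate $j$ (say $p_j > q_j$, so Alice returns $j$ but Bob does not return in iteration $k^\*$), Bob continues and may later land in a subinterval for $j$ again, giving $b = j = a$. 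So the exact analysis must add these contributions.

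To get the exact expression I would condition on the first iteration $k^\*$ that is active for \emph{Alice} (probability $\tfrac{1}{n}p_j$ per iteration for coordinate $j$, per-iteration active probability $\tfrac1n$). Given Alice returns $j$: if $u_{k^\*} \in [j-1,j-1+\min(p_j,q_j)]$ then Bob also returns $j$ in the same iteration. Otherwise $u_{k^\*}\in(j-1+q_j, j-1+p_j]$ (only possible when $p_j>q_j$), Bob did not return yet, and \emph{also} Bob returned nothing in all iterations $< k^\*$ because those $u_k$ missed Alice's regions, but they might have hit Bob's — wait, that cannot happen since $q_i \le \max(p_i,q_i)$... no: $u_k$ missing $[i-1,i-1+p_i]$ for all $i$ does not imply missing $[i-1,i-1+q_i]$ when $q_i>p_i$. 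This is the subtlety. The clean fix is to condition instead on the first iteration active for \emph{either} party and split into cases (active for both with same return — contributes collision; active for exactly one; active for both — impossible to disagree since within $[j-1,j]$ only coordinate $j$'s interval lives). Careful bookkeeping over the geometric number of iterations, summing the geometric series, yields $\Pr[a=b] = \frac{1-D_{\tv} + \sum_i |p_i - q_i|\min(p_i,q_i)}{1+D_{\tv}}$. The final inequality is immediate since $\sum_i|p_i-q_i|\min(p_i,q_i)\ge 0$.

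\textbf{Main obstacle.} The delicate point is the conditioning: because $q_i$ can exceed $p_i$, the event ``no party returned in iterations $1,\dots,k-1$'' does not factor cleanly per party, so one cannot simply say Bob ``restarts'' from a clean state after Alice commits. The right move is to always condition on the first iteration that is active for \emph{at least one} party, observe that within any unit interval $[j-1,j]$ the two parties can never return different values (only coordinate $j$'s subinterval lies there), and then carefully account for the residual randomness Bob uses after Alice has committed but Bob has not — this residual process is again a fresh sequence of i.i.d. uniforms, so Bob's continued sampling is an independent copy of the Weighted MinHash procedure, letting Bob's conditional output be exactly $\mathcal{Q}$ and letting us compute $\Pr[b=j \mid a = j]$ in closed form. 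Getting this independence-after-conditioning argument stated precisely is the crux; the algebra reducing everything via \eqref{eq:sum_mins_dtv} and \eqref{eq:sum_maxs_dtv} is routine.
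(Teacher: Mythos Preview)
Your proposal is correct and follows essentially the same route as the paper: condition on the first iteration $k^\ast$ at which $u_{k^\ast}$ lands in $\bigcup_j[j-1,j-1+\max(p_j,q_j)]$, split into the case where it lands in the overlap $[j-1,j-1+\min(p_j,q_j)]$ (both return $j$) versus the ``gap'' $[j-1+\min(p_j,q_j),j-1+\max(p_j,q_j)]$ (only one returns $j$, and the other continues with fresh independent uniforms so returns $j$ with probability $\min(p_j,q_j)$), and combine using \eqref{eq:sum_mins_dtv}, \eqref{eq:sum_maxs_dtv}. Your ``Main obstacle'' paragraph already names the precise independence-after-conditioning fact the paper uses, and the detour through the alternative conditioning (first iteration active for Alice) that you correctly abandon is just exposition, not a different argument.
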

\begin{proof}
As illustrated in \Cref{fig:dart_setup}, to run Protocol \ref{prot:minhash}, Alice draws values uniformly at random from the interval $[0,n]$. The protocol terminates and returns $j$ as soon as a value lands in any subinterval $[j-1,j-1 + p_j]$, which are shaded in the top row of \Cref{fig:dart_setup}. Since these subintervals have lengths $p_1, \ldots, p_n$, the probability that Alice returns $j$ is exactly $\frac{p_j}{\sum_{i=1}^n p_i} = p_j$. The proof that Bob returns $j$ with probability $q_j$ is identical. 

We turn our attention to the expression for $\Pr[a=b]$. Let $u_k$ denote be the first uniform random number for which either Alice or Bob terminates, i.e., the first $k$ for which $u_k \in [j-1,j-1 + \max(p_j,q_j)]$ for some $j$. We split the analysis into two cases: 
\begin{description}
\item[Case 1.]  Both Alice and Bob terminate at step $k$, i.e. $u_k \in [j-1,j-1 + \min(p_j,q_j)]$.
\item[Case 2.]  Only one of Alice and Bob terminate at step $k$, i.e. $u_k \in [j-1 + \min(p_j,q_j),j-1 + \max(p_j,q_j)]$.
\end{description}
In the first case, they always return the same index, $j = \lceil u_k \rceil$, so we have that $a =b$. Additionally, we can see that, conditioned on the event that $k$ is the first index for which $u_k \in [j-1,j-1 + \max(p_j,q_j)]$ for some $j$ (i.e., that $k$ is the first step that someone terminates at), the first case happens with probability:
\begin{align}
\frac{\sum_{i=1}^n\min(p_i,q_i)}{\sum_{i=1}^n\max (p_i,q_i)}. 
\end{align}
Next consider the second case. In this case, only one of Alice and Bob terminates at iteration $k$. The other terminates at iteration $k' > k$, and it may be that $a \neq b$.
To analyze the collision probability, we further break down the case, considering only when $u_k \in [j-1 + \min(p_j,q_j),j-1 + \max(p_j,q_j)]$ for a particular value of $j$.
We can see that, conditioned on the fact that one of the parties terminates at step $k$,
\begin{align*}
\Pr\left[u_k \in [j-1 + \min(p_j,q_j),j-1 + \max(p_j,q_j)]\right] = \frac{|p_j-q_j|}{\sum_{i=1}^n \max(p_i,q_i)}.
\end{align*}
If $u_k$ does land in the interval $[j-1 + \min(p_j,q_j),j-1 + \max(p_j,q_j)]$, Alice is the party who terminates and return $j$ whenever $p_j > q_j$. Otherwise, if $q_j > p_j$, Bob terminates and returns $j$. So, the question becomes, with what probability does the other party also returns $j$ on a future iteration? Future iterations involve random variables completely independent from $u_k$, so the probability is simply $\min(p_j,q_j)$. 

Putting together both cases, we have that:
\begin{align*}
    \Pr[a=b] = \frac{\sum_{i=1}^n\min (p_i,q_i)}{\sum_{i=1}^n\max (p_i,q_i)} \cdot 1 + \sum_{j=1}^n \frac{|p_j-q_j|}{\sum_{i=1}^n \max(p_i,q_i)} \cdot \min(p_j,q_j).
\end{align*}
The final result follows from rearranging and simplifying by applying \Cref{eq:sum_mins_dtv,eq:sum_maxs_dtv}.
\end{proof}

\subsection{Optimality of Communication-free Protocols}
\label{sec:upper_bound}
As discussed in \Cref{sec:intro}, when parameterizing performance by total variance distance, it can be proven that the bound of \Cref{clm:minhash} for Weighted MinHash is optimal for worst-case distributions. \cite{BavarianGhaziHaramaty:2020} do so via a reduction from an easier ``constrained agreement problem''. We provide an even simpler proof here, first restating the formal optimality claim from \Cref{sec:intro}:

\clmub*
\begin{proof}
As a warm up, we first consider a case involving distributions with total variation distance $1/2$, i.e., $d =2$. We will show that, without communication, it is not possible to achieve $\Pr[a=b] \geq \frac{1 - 1/2}{1+1/2} = 1/3$.
Specifically consider the following distributions $\mathcal{P},\mathcal{Q}$ and $\mathcal{R}$, which are each supported on three elements: 
\begin{align*}
\begin{bmatrix}
p_1 \\ p_2\\ p_3
\end{bmatrix} &= \begin{bmatrix}
1/2 \\ 1/2 \\ 0
\end{bmatrix} & 
\begin{bmatrix}
q_1 \\ q_2 \\ q_3
\end{bmatrix} &= \begin{bmatrix}
1/2 \\ 0 \\ 1/2
\end{bmatrix}& 
\begin{bmatrix}
r_1 \\ r_2 \\ r_3
\end{bmatrix} &= \begin{bmatrix}
0 \\ 1/2 \\ 1/2
\end{bmatrix}.
\end{align*}
Now, suppose a protocol is used to  sample $a\sim \mathcal{P},b\sim \mathcal{Q}$ and $c\sim \mathcal{R}$.
For any $i\in \{1,2,3\}$, it must be that one of $a,b$ or $c$ does not equal $i$, since at least one of the distributions has $0$ mass on item $i$. It follows that only \emph{one of} the equalities $a=b$, $a=c$, and $b=c$ holds at once. 
So, we have that for any protocol, $\min(\Pr[a=b], \Pr[a=c], \Pr[b=c]) \leq 1/3$. I.e., for two of the distributions, the protocol gives collision probability $\leq 1/3$. The result then follows from simply verifying that that $D_{\tv}(\mathcal{P},\mathcal{Q}) = D_{\tv}(\mathcal{P},\mathcal{R}) = D_{\tv}(\mathcal{Q},\mathcal{R}) = 1/2$:

We can generalize this result by having $d+1$ distributions $\mathcal{P}_1, \ldots, \mathcal{P}_{d+1}$ over $1, \ldots, d+1$. Specifically, let the $i^\text{th}$ distribution, $\mathcal{P}_i$, be uniform over all $j \neq i$. So, $x\sim \mathcal{P}_i$ has $1/d$ probability of equaling any $j \neq i$. It can be checked that $D_{\tv}(\mathcal{P}_i,\mathcal{P}_j) = 1/d$ for all $i,j$.

Now, consider a protocol that samples $x_1 \sim \mathcal{P}_1, \ldots, x_{d+1}\sim \mathcal{P}_{d+1}$. Consider all ${d+1 \choose 2}$ equalities of the form $\{x_i = x_j, i\neq j\}$. No matter what the outcome of $x_1, \ldots, x_{d+1}$, we  claim that at least $d$ of these inequalities will not be satisfied. In particular, as in the argument above, it cannot be that $x_1, \ldots, x_d$ all have the same value $v$ since one of $\mathcal{P}_1, \ldots, \mathcal{P}_{d+1}$ has zero mass for any possible $v$. As such, at least two items in $x_1, \ldots, x_{d+1}$ must be different from each other. 
Suppose without loss of generality that $x_1 \neq x_2$. 

First, the equality $x_1 = x_2$ is immediately not satisfied. Then for all $j \neq 1,2$, it must be that either $x_j \neq x_1$ or $x_j\neq x_2$, which is $d+1 - 2 = d-1$ additional inequalities not satisfied. 
So, no matter what the outcome of the protocol, the ratio of equalities of the form $\{x_i = x_j, i\neq j\}$ that are satisfied is at most: 
\begin{align*}
1 - \frac{d}{{d+1 \choose 2}} = 1 - \frac{2}{d+1}.
\end{align*}
It follows that there must be some pair $i,j$ such that:
\begin{align*}
\Pr[x_i = x_j] \leq 1 - \frac{2}{d+1} = \frac{1-1/d}{1+1/d}.&\qedhere
\end{align*}
\end{proof}

\section{Better Coupling via Gumbel Sampling}
Despite the limitation of \Cref{{clm:ub}}, in this section we show that it is nevertheless possible to strictly improve on the Weighted MinHash method from the previous section. To do so, we analyze a method based on Gumbel sampling, an existing protocol that is widely used for sampling from discrete distributions, even when coupling is not necessary. Under various names like PPSWOR, order sampling, and bottom-k sampling, the method has found widespread application across randomized algorithms  \citep{Rosen:1997,Cohen:2023,EfraimidisSpirakis:2006}. We give pseudocode in Protocol \ref{prot:gumbel}.

\begin{protocol}
    \caption{Gumbel Coupling}
        Fix public random numbers $u_1, \ldots, u_n$ drawn uniformly from the interval $[0,1]$.\\
        \textbf{Protocol for Alice} (who has probability vector $\mathcal{P} = [p_1, \ldots, p_n]$):
        \begin{algorithmic}[1]
            \State Return $a = \argmin_{i\in \{1, \ldots, n\}} \frac{-\ln(u_i)}{p_i}$.
        \end{algorithmic}
         \textbf{Protocol for Bob}  (who has probability vector $\mathcal{Q} = [q_1, \ldots, q_n]$):
        \begin{algorithmic}[1]
            \State Return $b = \argmin_{i\in \{1, \ldots, n\}} \frac{-\ln(u_i)}{q_i}$.
        \end{algorithmic}
    \label{prot:gumbel}
\end{protocol}

Our main result on Gumbel sampling is \Cref{clm:gumbel}, which we restate below:
\clmgumbel*

As discussed in \Cref{sec:intro}, it is not hard to find distributions for which the inequalities above are strict. In fact, Gumbel sampling often obtains a collision probability to the optimal $1 - D_{\tv}(\mathcal{P}, \mathcal{Q})$. Understanding this strong practical performance is an interesting direction for future work. 
It may be be that the bound of Theorem~\ref{clm:gumbel} tends to be much stronger for the sort of ``quickly decaying'' distributions common in applications. As an extreme example, suppose that $n=2$, or equivalently, $\mathcal{P}$ and $\mathcal{Q}$ only place mass on two items. Then, as above, Gumbel sampling has exactly the same collision probability as that of the optimal coupling.  To see this, without loss of generality, let $p_1 \leq q_1$, then the result of Theorem~\ref{clm:gumbel} simplifies as:
\begin{align*}
    \Pr[a=b] = \frac{1}{1 + \max\left(\frac{p_1}{p_2}, \frac{q_1}{q_2}\right)} + \frac{1}{1 + \max\left(\frac{p_2}{p_1}, \frac{q_2}{q_1}\right)} = \frac{1}{1 + \frac{q_1}{q_2}} + \frac{1}{1 + \frac{p_2}{p_1}} = q_2 + p_1 = 1 - D_{\tv}(\mathcal{P},\mathcal{Q}).
\end{align*}

We prove \Cref{clm:gumbel} in two parts, first showing that $\Pr[a = b] \geq \frac{1 - D_{\tv}(\mathcal{P}, \mathcal{Q})}{1 + D_{\tv}(\mathcal{P}, \mathcal{Q})}$. We then prove that $\Pr[a = b] \geq C_{\text{WMH}}$ which is more difficult. To do so, we require some standard preliminaries about exponential random variables. For a reference, see, e.g. \cite{Cohen:2023}.
\begin{fact}[Properties of Exponential Random Variables] 
\label{fact:exp_properties}
A random variable $X\sim \mexp(\lambda)$ is exponential with parameter $\lambda$ if $X = -\ln(u)/\lambda$ for a random variable $u$ drawn uniformly from $[0,1]$. Suppose we have independent exponential random variables $X_1\sim \mexp(\lambda_1)$ and $X_2\sim \mexp(\lambda_2)$. Then we have:
\begin{itemize}
    \item $\Pr[X_1 < X_2] = \frac{\lambda_1}{\lambda_1 + \lambda_2}$.
    \item $\min(X_1,X_2)$ is an exponential random variable with parameter $\lambda_1 + \lambda_2$.
\end{itemize}
\end{fact}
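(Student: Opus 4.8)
The plan is to argue directly from the definition supplied in the statement: $X \sim \mexp(\lambda)$ means $X = -\ln(u)/\lambda$ for $u$ drawn uniformly from $[0,1]$. The first step is to extract the survival function of such an $X$. For any $t \ge 0$,
\[
\Pr[X > t] = \Pr[-\ln(u)/\lambda > t] = \Pr[u < e^{-\lambda t}] = e^{-\lambda t},
\]
since $u$ is uniform on $[0,1]$ and $e^{-\lambda t} \in (0,1]$, while $\Pr[X > t] = 1$ for $t < 0$. Differentiating the CDF gives the density $f_X(t) = \lambda e^{-\lambda t}$ for $t \ge 0$ and $0$ otherwise. I would also record that $X$ is a continuous random variable, so $\Pr[X = c] = 0$ for every fixed $c$; this lets me ignore ties below.

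Next I would handle the second bullet first, as it is the cleaner of the two. Using independence of $X_1$ and $X_2$ together with the set identity $\{\min(X_1,X_2) > t\} = \{X_1 > t\} \cap \{X_2 > t\}$, for $t \ge 0$ we get
\[
\Pr[\min(X_1,X_2) > t] = \Pr[X_1 > t]\,\Pr[X_2 > t] = e^{-\lambda_1 t}\,e^{-\lambda_2 t} = e^{-(\lambda_1 + \lambda_2) t},
\]
and this probability is $1$ for $t < 0$. Since the distribution of a real-valued random variable is determined by its survival function, and the right-hand side is exactly the survival function computed in the first step with parameter $\lambda_1 + \lambda_2$, it follows that $\min(X_1,X_2)$ has the same law as $-\ln(u)/(\lambda_1+\lambda_2)$, i.e. $\min(X_1,X_2) \sim \mexp(\lambda_1 + \lambda_2)$.

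For the first bullet, I would condition on the value of $X_1$ — formally, integrate against its density and use independence, justified by Tonelli's theorem since the integrand is nonnegative:
\[
\Pr[X_1 < X_2] = \int_0^\infty f_{X_1}(t)\,\Pr[X_2 > t]\,dt = \int_0^\infty \lambda_1 e^{-\lambda_1 t}\,e^{-\lambda_2 t}\,dt = \frac{\lambda_1}{\lambda_1 + \lambda_2}.
\]
Because $\Pr[X_1 = X_2] = 0$, the event $\{X_1 < X_2\}$ may be replaced by $\{X_1 \le X_2\}$ without changing the probability, so there is no subtlety at the boundary. There is no genuinely hard step in any of this; the only points requiring a moment of care are the measure-theoretic bookkeeping — that a survival function pins down the distribution (so matching survival functions in the second bullet really does identify $\min(X_1,X_2)$ as exponential), and that Tonelli's theorem licenses the integral interchange in the first bullet. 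Everything else reduces to one-line computations with $e^{-\lambda t}$.
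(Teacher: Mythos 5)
Your proof is correct. The paper does not actually prove \Cref{fact:exp_properties} --- it states it as a known fact and cites a reference --- so there is nothing to compare against; your argument (survival function $e^{-\lambda t}$ from the definition, product of survival functions for the minimum, and integration against the density for $\Pr[X_1 < X_2]$) is the standard derivation and fully justifies the statement.
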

\subsection{Main Analysis}
\begin{proof}[Proof of \Cref{clm:gumbel}]
First note that:
\begin{align*}
    \Pr[a = b] = \sum_{j=1}^n \Pr[a = j \text{ and } b= j]. 
\end{align*}
So, we focus on analyzing $\Pr[a = j \text{ and } b = j]$ for a fixed $j$. Note that the probability is $0$ is either $p_j = 0$ or $q_j = 0$, so we assume below that $\min(p_j,q_j) > 0$. 
We have that $a = j$ and $b=j$ if for all $i \neq j$,
\begin{align*}
\frac{-\ln(u_j)}{p_j} &< \frac{-\ln(u_i)}{p_i} & &\text{and} & \frac{-\ln(u_j)}{q_j} &< \frac{-\ln(u_i)}{q_i}. 
\end{align*}
Equivalently, $a = j$ and $b=j$ if
\begin{align*}
-\ln(u_j) &< \min_{i\neq j} \left[-\ln(u_i)\cdot \frac{1}{\max(p_i/p_j, q_i/q_j)}\right].
\end{align*}
Observe that $-\ln(u_j)$ is an exponential random variable with parameter $\lambda = 1$ and, applying \Cref{fact:exp_properties},  $\min_{i\neq j} \left[-\ln(u_i)\cdot \frac{1}{\max(p_i/p_j, q_i/q_j)}\right]$ is exponential with parameter $\lambda = \sum_{i\neq j} \max(p_i/p_j, q_i/q_j)$. These two random variables are independent, so again by \Cref{fact:exp_properties}:
\begin{align}
\label{eq:exact_gumbel}
\Pr[a = j \text{ and } b = j] = \frac{1}{1 + \sum_{i\neq j} \max(p_i/p_j, q_i/q_j)}.
\end{align}
Summing over all $j$ for which $\min(p_j,q_j) > 0$ yields the exactly expression for $\Pr[a = b]$ in \cref{clm:gumbel}.

We next simplify this expression to prove that $\Pr[a=b] \geq \frac{1 - D_{\tv}(\mathcal{P},\mathcal{Q})}{1 + D_{\tv}(\mathcal{P},\mathcal{Q})}$, which establishes that Gumbel sampling at least matches the worst-case bound of \Cref{clm:ub}. Assume for now that $p_j \leq q_j$. We can make a symmetric argument for the case when $q_j > p_j$. If $p_j \leq q_j$, we have:
\begin{align*}
\Pr[a = j \text{ and } b = j] = \frac{p_j}{p_j + \sum_{i\neq j} \max(p_i, q_i\cdot \frac{p_j}{q_j})} = \frac{p_j}{\sum_{i=1}^n \max(p_i, q_i\cdot \frac{p_j}{q_j})}  \geq \frac{p_j}{\sum_{i=1}^n \max(p_i, q_i)}.
\end{align*}
To obtain the inequality, we use that $p_j/q_j \leq 1$ when $p_j \leq q_j$.
Similarly, when $q_j < p_j$, we have that $\Pr[a = j \text{ and } b = j] > \frac{q_j}{\sum_{i=1}^n \max(p_i, q_i)}.$
Summing over all $j$,
\begin{align*}
\Pr[a =b] \geq \frac{\sum_{j=1}^n \min(p_j,q_j)}{\sum_{j = 1}^n \max(p_i, q_i)} = \frac{1 - D_{\tv}(\mathcal{P},\mathcal{Q})}{1 + D_{\tv}(\mathcal{P},\mathcal{Q})}. 
\end{align*}

\subsubsection*{Pareto Improvement over Weighted MinHash}
Next, we strengthen the above claim by showing that, in fact, $\Pr[a=b] \geq C_{\text{WMH}}(\mathcal{P},\mathcal{Q})$. I.e., that Gumbel sampling offers a Pareto improvement over Weighted MinHash. 

To prove the stronger bound, we will analyze $\Pr[a = j \text{ and } b = j]$ separately for each $j \in 1, \ldots, n$. Again, we restrict to the case when $\min(p_j,q_j) > 0$, since if either probability equals $0$, then $\Pr[a = j \text{ and } b = j]$ equals $0$ for both Gumbel sampling and Weighted MinHash. When $a$ and $b$ are sampled from $\mathcal{P}$ and $\mathcal{Q}$ via Weighted MinHash, it can be seen from the analysis in \Cref{sec:protocols} that
$\Pr[a = j \text{ and } b = j] = {\min(p_j, q_j) \left(1 + |p_j - q_j|\right)}/{\sum_{i=1}^n \max(p_i, q_i)}.$
So, based on our exact expression for the collision probability for Gumbel sampling from \eqref{eq:exact_gumbel}, it suffices to prove that:
\begin{align}
\label{eq:to_prove_pareto}
\frac{\min(p_j, q_j) \left(1 + |p_j - q_j|\right)}{\sum_{i=1}^n \max(p_i, q_i)} 
    \leq 
    \frac{1}{\sum_{i=1}^n \max\left({p_i}/{p_j}, {q_i}/{q_j}\right)}.
\end{align}
Summing over all $j$ would then establish our desired bound. 

To prove \eqref{eq:to_prove_pareto}, we first note that, without loss of generality,  we can assume $p_j \leq q_j$ and let $q_j = k \cdot p_j$ for some $k \geq 1$. Substituting and rearranging, we can then restate the \eqref{eq:to_prove_pareto} as follows:
\begin{align*}
    p_j \left(1 + (k-1) p_j \right) \cdot \sum_{i=1}^n \max\left({p_i}/{p_j}, {q_i}/{k  p_j}\right) 
    &\leq  
    \sum_{i=1}^n \max(p_i, q_i).
\end{align*}
To prove the above, it suffices to show that:
\begin{align}
    \label{eq:main_to_prove}
    \sum_{i=1}^n S_i &\geq 0, & &\text{where} & S_i&\eqdef\left(\max(p_i, q_i) - \left(1 + (k-1) p_j \right) \cdot  \max\left(p_i, {q_i}/{k}\right) \right).
\end{align}
We analyze each term in the summation by splitting into two cases:

\begin{description}
    \item[Case 1: $p_i > \frac{q_i}{k}$.] In this case, we have:
    \begin{align*}
        S_i = \max(p_i, q_i) - \left(1 + (k-1) p_j \right) p_i
        \geq p_i - \left(1 + (k-1) p_j \right) p_i  = (1 - k) p_j p_i.
    \end{align*}
    \item[Case 2: $p_i \leq \frac{q_i}{k}$.] In this case, we have:
    \begin{align*}
    S_i = \left( 1 - \frac{1}{k} - (k-1) \frac{p_j}{k} \right) q_i 
     = \frac{k-1}{k} (1 - p_j) q_i \geq (k-1) (1 - p_j) p_i. 
    \end{align*}
\end{description}
We conclude that:
\begin{align*}
\sum_{i=1}^n S_i \geq 
(1 - k) p_j\cdot\left[ \sum_{i: p_i > \frac{q_i}{k}}  p_i\right] + (k-1)(1 - p_j)\cdot \left[\sum_{i: p_i \leq \frac{q_i}{k}}p_i\right].
\end{align*}
Finally, recall that, by definition, $q_j = kp_j$. So  the summation $\sum_{i: p_i > \frac{q_i}{k}} p_i$ excludes the case $i=j$, and thus must be $\leq 1-p_j$. Similarly, $\sum_{i: p_i \leq \frac{q_i}{k}} p_i \geq p_j$. Recalling that $k \geq 1$, so $(1-k)$ is non-positive and $(k-1)$ is non-negative, we conclude that:
\begin{align*}
    \sum_{i=1}^n S_i \geq 
    (1 - k) p_j(1-p_j) + (k-1)(1 - p_j)p_j = 0.
\end{align*}
We have thus proven equation \eqref{eq:main_to_prove} and \eqref{eq:to_prove_pareto}. As discussed, summing \eqref{eq:to_prove_pareto} over all $j$ with $\min(p_j,q_j) > 0$ establishes that, if $a,b$ are sampled from $\mathcal{P},\mathcal{Q}$ via Gumbel sampling, then $\Pr[a=b] \geq C_{\text{WMH}}(\mathcal{P},\mathcal{Q})$.
\end{proof}

\paragraph{Remark.}
We conclude by observing that the main inequality $\Pr[a=b] \geq C_{\text{WMH}}(\mathcal{P},\mathcal{Q})$ in \Cref{clm:gumbel} is actually strict under very mild conditions, i.e., in most cases Gumbel sampling strictly improves on Weighted MinHash.
In particular, consider the case when there are just three indices $j, k, l$ for which $p_j \neq q_j$, and $p_k \neq q_k$, and $p_l \neq q_l$, and one of the pairs $p_l,q_l$, $p_j,q_j$, or $p_k,q_k$ is non-zero. Then, it must be that, for two indices (without loss of generality, $j$ and $k$, either $p_j > q_j$ \emph{and} $p_k > q_k$, \emph{or} $q_j > p_j$ \emph{and} $q_k > p_k$. Without loss of generality, assume we are in the first case. Furthermore, without loss of generality, assume that $\frac{p_k}{q_k} \geq \frac{p_j}{q_j}$.

Now, consider our proof above where we analyze $\Pr[a = j \text{ and } b = j]$. We claim that at least one of the following inequalities from the end of the proof \emph{must} be strict:
\begin{align*}
    \sum_{i: p_i > \frac{q_i}{k}} p_i &\leq 1-p_j & &\text{or} & \sum_{i: p_i \leq \frac{q_i}{k}} p_i &\geq p_j.
\end{align*}
Specifically, if $\frac{p_j}{q_j} = \frac{p_k}{q_k}$, then $\sum_{i: p_i \leq \frac{q_i}{k}} p_i \geq p_j + p_k > p_k$, so the second inequality is strict. On the other hand, if $\frac{p_k}{q_k} > \frac{p_j}{q_j}$ then the first summation excludes \emph{both} $p_j, p_k$, which means $\sum_{i: p_i > \frac{q_i}{k}} p_i \leq 1 - p_j - p_k < 1 - p_j$.

Another example discussed in \cite{BavarianGhaziHaramaty:2020} is when $\mathcal{P}$ and $\mathcal{Q}$ are both uniform, but over different sized subsets (call them $\mathcal{A}$ and $\mathcal{B}$) of $\{1, \ldots, n\}$. In this special case, it can be checked that Gumbel sampling is exactly equivalent to Broder's ``unweighted MinHash'' method \citep{Broder:1997}. It yields collision probability $\frac{|\mathcal{A}\cap \mathcal{B}|}{|\mathcal{A}\cap \mathcal{B}|}$, whereas Weighted MinHash achieves a strictly worse probability of $\frac{\left(1+ \left|1/|\mathcal{A}| - 1/|\mathcal{B}|\right|\right)\cdot |\mathcal{A}\cap \mathcal{B}|}{|\mathcal{A}\cap \mathcal{B}| + |\mathcal{B}| - |\mathcal{A}|}$.

\section{Application: Drafter-Invariant Speculative
Decoding}
\label{sec:experiments}
With our main theoretical results in place, in this section we describe an application of communication-free coupling to accelerating autoregression language models, which have demonstrated impressive capabilities across a wide range of language-related tasks~\citep{touvron2023llama, achiam2023gpt, team2023gemini}. Given a text query $q$ (e.g., a question), the goal of autoregressive language models (LMs) is to generate a sequence of tokens $t_1,t_2,\ldots$, which correspond to words or word-pieces that comprise an answer or response to the query. At step $i$ of the generation, the  LM computes the conditional distribution $\text{Pr}(\cdot | q, t_1, \ldots, t_{i-1})$ and the next token is obtained by sampling according to this probability distribution. Computing this next-token distribution requires a forward pass through the neural network at every step $i$, which has significant computational costs, especially for large models.

Speculative decoding was proposed by \citet{LeviathanKalmanMatias:2023} and \citet{ChenBorgeaudIrving:2023} as a way to accelerate token generation. The method seeks to partially parallelize the process by using an inexpensive \emph{approximate distribution} (computed using a smaller neural network) to predict, or ``draft'', the next $\gamma$ tokens, $t_i, t_{i+1}, \ldots, t_{t+\gamma}$. Let the draft tokens be denoted by $\tilde{t}_i, \tilde{t}_{i+1}, \ldots, \tilde{t}_{t+\gamma}$. The larger neural network (or multiple copies of the network) can compute and sample from the distributions $\text{Pr}(\cdot | q, t_1, \ldots, {t}_{i-1})$, $\text{Pr}(\cdot | q, t_1, \ldots, {t}_{i-1},\tilde{t}_{i})$, $\text{Pr}(\cdot | q, t_1, \ldots, {t}_{i-1},\tilde{t}_{i},\tilde{t}_{i+1}), \ldots, \text{Pr}(\cdot | q, t_1, \ldots, {t}_{i-1},\tilde{t}_{i}, \ldots, \tilde{t}_{i+\gamma})$ \emph{in parallel}. If $\tilde{t}_i$ matches the token $t_i$ sampled from $\text{Pr}(\cdot | q, t_1, \ldots, {t}_{i-1})$, then the sample from $\text{Pr}(\cdot | q, t_1, \ldots, {t}_{i-1},\tilde{t}_{i})$ is a proper sample from the desired large model distribution. Further, if $\tilde{t}_{i+1}$ matches the token $t_{i+1}$ sampled from $\text{Pr}(\cdot | q, t_1, \ldots, {t}_{i-1},\tilde{t}_{i})$, then, likewise, the sample from $\text{Pr}(\cdot | q, t_1, \ldots, {t}_{i-1},\tilde{t}_{i},\tilde{t}_{i+1})$ is a valid sample from the large model. Overall, we can obtain $k$ samples in parallel if the first $k$ draft tokens match tokens sampled by the large model. We refer readers to \citet{LeviathanKalmanMatias:2023} for further details of the method.

In this setting, Bob corresponds to the large, expensive neural network, which knows the ``true'' distribution $\mathcal{Q}$ and Alice corresponds to the inexpensive neural network, which knows some distribution $\mathcal{P}$ that approximates $\mathcal{Q}$. We want Alice to sample a token from $\mathcal{P}$ that, with high probability, matches Bob's token sampled from $\mathcal{Q}$. Alice has to generate her sample before Bob, so the setting inherently does not allow communication from Bob to Alice. However, Alice can in principal communicate information to Bob that can be used when generating his sample. This is exactly what is done in current implementations of speculative decoding: Alice actually communicates the entire distribution she used to sample $\tilde{t}_{i+j}$, which allows Bob to sample in an optimally coupled way \citep[Algorithm 1]{LeviathanKalmanMatias:2023}. 

The end result is that Alice's predictions are accurate with probability exactly equal to $1 - D_{\tv}(\mathcal{P}, \mathcal{Q})$. However, this comes with a potential caveat: the output of the process depends on \emph{both} $\mathcal{P}$ and $\mathcal{Q}$. Hence, if the small ``drafter'' model changes, while the token distribution never changes, the exact token sampled by the large model very well could. This is undesirable: as discussed in \Cref{sec:intro}, the output of an LM is ideally {fixed given a fixed random seed}, no matter what optimizations are used to accelerate inference of the network. This property allows for users to easily reproduce results and facilitates easier unit testing and debugging. Naively, speculative sampling destroys this property. 

Fortunately, this issue can be fixed with a communication-free protocol! If we use a communication-free protocol to sample from the large model, then necessarily the models output is independent of the drafter model. We call the resulting approach ``Drafter-Invariant Speculative Decoding''. There is a price to pay for drafter-invariance: by using a communication-free protocol, we can only ensure that the draft tokens are correct with probability 
$c = \frac{1 - D_{\tv}(\mathcal{P},\mathcal{Q})}{1 + D_{\tv}(\mathcal{P},\mathcal{Q})}$ instead of 
$c = 1 - D_{\tv}(\mathcal{P},\mathcal{Q})$. Since the expected number of sequential tokens drafted correctly is $\frac{1}{1-c}$, this could lead to a roughly $2 \times$ reduction, although as we will see, the communication-free protocols analyzed in this work, especially our Gumbel sampling approach, tend to outperform the worst-case bound of \Cref{clm:ub} for real-world distributions arising from language models.

\begin{table}[ht]
  \centering
  \begin{tabular}{|c|c|c|c|c|c|}
    \hline
    \multicolumn{1}{|c|}{} & \multicolumn{4}{c|}{\textbf{Speculative Decoding}} \\
    \hline
    \multicolumn{1}{|c|}{} & \multicolumn{2}{c|}{\textbf{Drafter-Dependent (standard)}} & \multicolumn{2}{c|}{\textbf{Drafter-Invariant (ours)
    }} \\
    \hline
     No Drafter & Drafter: \small{Gemma 9B} & Drafter: \small{Gemma 2B} & Drafter: \small{Gemma 9B} &Drafter: \small{Gemma 2B}\\
    \hline
    \begin{tabular}[c]{@{}c@{}}\textbf{Yes}\\{}\end{tabular} &
    \begin{tabular}[c]{@{}c@{}}\textbf{Yes}\\{\color{lightred}\customfontsize Typically}\end{tabular} & 
    \begin{tabular}[c]{@{}c@{}}\textbf{Yes}\\{\color{lightred}\customfontsize **}\end{tabular} &  
    \begin{tabular}[c]{@{}c@{}}\textbf{Yes}\\{\color{lightred}\customfontsize It}\end{tabular} & 
    \begin{tabular}[c]{@{}c@{}}\textbf{Yes}\\{\color{lightred}\customfontsize No}\end{tabular} \\
    \hline
    \begin{tabular}[c]{@{}c@{}}\textbf{some}\\{}\end{tabular} &
    \begin{tabular}[c]{@{}c@{}}\textbf{often}\\{\color{lightblue}\customfontsize often}\end{tabular} & 
    \begin{tabular}[c]{@{}c@{}}\textbf{NP-hard}\\{\color{lightblue}\customfontsize NP-hard}\end{tabular} & 
    \begin{tabular}[c]{@{}c@{}}\textbf{some}\\{\color{lightred}\customfontsize but}\end{tabular} & 
    \begin{tabular}[c]{@{}c@{}}\textbf{some}\\{\color{lightred}\customfontsize NP}\end{tabular} \\
    \hline
    \begin{tabular}[c]{@{}c@{}}\textbf{NP-hard}\\{}\end{tabular} &  
    \begin{tabular}[c]{@{}c@{}}\textbf{within}\\{\color{lightred}\customfontsize with}\end{tabular} & 
    \begin{tabular}[c]{@{}c@{}}\textbf{problems}\\{\color{lightblue}\customfontsize problems}\end{tabular} & 
    \begin{tabular}[c]{@{}c@{}}\textbf{NP-hard}\\{\color{lightblue}\customfontsize NP-hard}\end{tabular} & 
    \begin{tabular}[c]{@{}c@{}}\textbf{NP-hard}\\{\color{lightblue}\customfontsize NP-hard}\end{tabular} \\
    \hline
    \begin{tabular}[c]{@{}c@{}}\textbf{problems}\\{}\end{tabular} & 
    \begin{tabular}[c]{@{}c@{}}\textbf{a}\\{\color{lightblue}\customfontsize a}\end{tabular} & 
    \begin{tabular}[c]{@{}c@{}}\textbf{often}\\{\color{lightred}\customfontsize can}\end{tabular} & 
    \begin{tabular}[c]{@{}c@{}}\textbf{problems}\\{\color{lightblue}\customfontsize problems}\end{tabular} & 
    \begin{tabular}[c]{@{}c@{}}\textbf{problems}\\{\color{lightblue}\customfontsize problems}\end{tabular} \\
    \hline
    \begin{tabular}[c]{@{}c@{}}\textbf{have}\\{}\end{tabular} &  
    \begin{tabular}[c]{@{}c@{}}\textbf{bounded}\\{\color{lightred}\customfontsize prov}\end{tabular} & 
    \begin{tabular}[c]{@{}c@{}}\textbf{have}\\{\color{lightblue}\customfontsize have}\end{tabular} & 
    \begin{tabular}[c]{@{}c@{}}\textbf{have}\\{\color{lightred}\customfontsize can}\end{tabular} & 
    \begin{tabular}[c]{@{}c@{}}\textbf{have}\\{\color{lightred}\customfontsize can}\end{tabular} \\
    \hline
    \begin{tabular}[c]{@{}c@{}}\textbf{efficient}\\{}\end{tabular} &
    \begin{tabular}[c]{@{}c@{}}\textbf{factor}\\{\color{lightblue}\customfontsize factor}\end{tabular} & 
    \begin{tabular}[c]{@{}c@{}}\textbf{efficient}\\{\color{lightblue}\customfontsize efficient}\end{tabular} & 
    \begin{tabular}[c]{@{}c@{}}\textbf{efficient}\\{\color{lightblue}\customfontsize efficient}\end{tabular} & 
    \begin{tabular}[c]{@{}c@{}}\textbf{efficient}\\{\color{lightred}\customfontsize fast}\end{tabular} \\
    \hline
    \begin{tabular}[c]{@{}c@{}}\textbf{approximation}\\{}\end{tabular} & 
    \begin{tabular}[c]{@{}c@{}}\textbf{of}\\{\color{lightblue}\customfontsize of}\end{tabular} & 
    \begin{tabular}[c]{@{}c@{}}\textbf{approximation}\\{\color{lightblue}\customfontsize approximation}\end{tabular} & 
    \begin{tabular}[c]{@{}c@{}}\textbf{approximation}\\{\color{lightblue}\customfontsize approximation}\end{tabular} & 
    \begin{tabular}[c]{@{}c@{}}\textbf{approximation}\\{\color{lightred}\customfontsize approximations}\end{tabular} \\
    \hline
    \begin{tabular}[c]{@{}c@{}}\textbf{algorithms.}\\{}\end{tabular} & 
    \begin{tabular}[c]{@{}c@{}}\textbf{the}\\{\color{lightblue}\customfontsize the}\end{tabular} & 
    \begin{tabular}[c]{@{}c@{}}\textbf{algorithms.}\\{\color{lightblue}\customfontsize algorithms.}\end{tabular} & 
    \begin{tabular}[c]{@{}c@{}}\textbf{algorithms.}\\{\color{lightblue}\customfontsize algorithms.}\end{tabular} &  
    \begin{tabular}[c]{@{}c@{}}\textbf{algorithms.}\\{\color{lightblue}\customfontsize algorithms.}\end{tabular} \\
    \hline
    \begin{tabular}[c]{@{}c@{}}\textbf{\texttt{<eos>}}\\{}\end{tabular} & 
    \begin{tabular}[c]{@{}c@{}}\textbf{optimal}\\{\color{lightblue}\customfontsize optimal}\end{tabular} & 
    \begin{tabular}[c]{@{}c@{}}\textbf{\texttt{<eos>}}\\{\color{lightblue}\customfontsize \texttt{<eos>}}\end{tabular} & 
    \begin{tabular}[c]{@{}c@{}}\textbf{\texttt{<eos>}}\\{\color{lightblue}\customfontsize \texttt{<eos>}}\end{tabular} &  
    \begin{tabular}[c]{@{}c@{}}\textbf{\texttt{<eos>}}\\{\color{lightblue}\customfontsize \texttt{<eos>}}\end{tabular} \\
    \hline
    - & 
    \begin{tabular}[c]{@{}c@{}}\textbf{solution.}\\{\color{lightblue}\customfontsize solution.}\end{tabular} & -
    & -
    &  
    - \\
    \hline
  \end{tabular}
\caption{
  We applied standard Speculative Decoding \citep{LeviathanKalmanMatias:2023} and our \CSD{} method to a 27 billion parameter Gemma model \citep{gemmateam2024gemma2improvingopen} to generate responses to the query: \emph{``Can NP-hard problems be approximated efficiently?''}. Smaller 9 and 2 billion parameter Gemma models were used as drafters.  
  The table show the tokens returned by the large model, as well as the draft tokens proposed by the smaller models (displayed below). If the draft token matches the large model, it is shown in {\color{lightblue} blue}; otherwise, it is show in {\color{lightred}red}. As we can see, \CSD{} always results in a response of \emph{``Yes some NP-hard problems have efficient approximation algorithms.''}, no matter what drafter is used. In contrast, standard Speculative Decoding leads to three different responses depending on the drafter. }
  \label{tab:sorting_algorithms}
\end{table}

\subsection{Experimental Evaluation}
We used Gumbel sampling to implement \CSD{} to accelerate generation of tokens for the 27 billion parameter \texttt{gemma-2-27b-it} model, using the smaller \texttt{gemma-2-9b-it}, and \texttt{gemma-2-2b-it} models as drafters \citep{gemmateam2024gemma2improvingopen}\footnote{\href{https://huggingface.co/docs/transformers/en/model\_doc/gemma2}{https://huggingface.co/docs/transformers/en/model\_doc/gemma2}}. 
Table \ref{tab:sorting_algorithms} shows that, for \CSD{}, the tokens generated by the large model remain the same regardless of the drafter model. In contrast, this is not the case for standard Speculative Decoding, which uses an optimal coupling.

\begin{figure}[h!]
    \centering
    \begin{subfigure}{.333\linewidth}
        \centering
        \includegraphics[width=\linewidth]{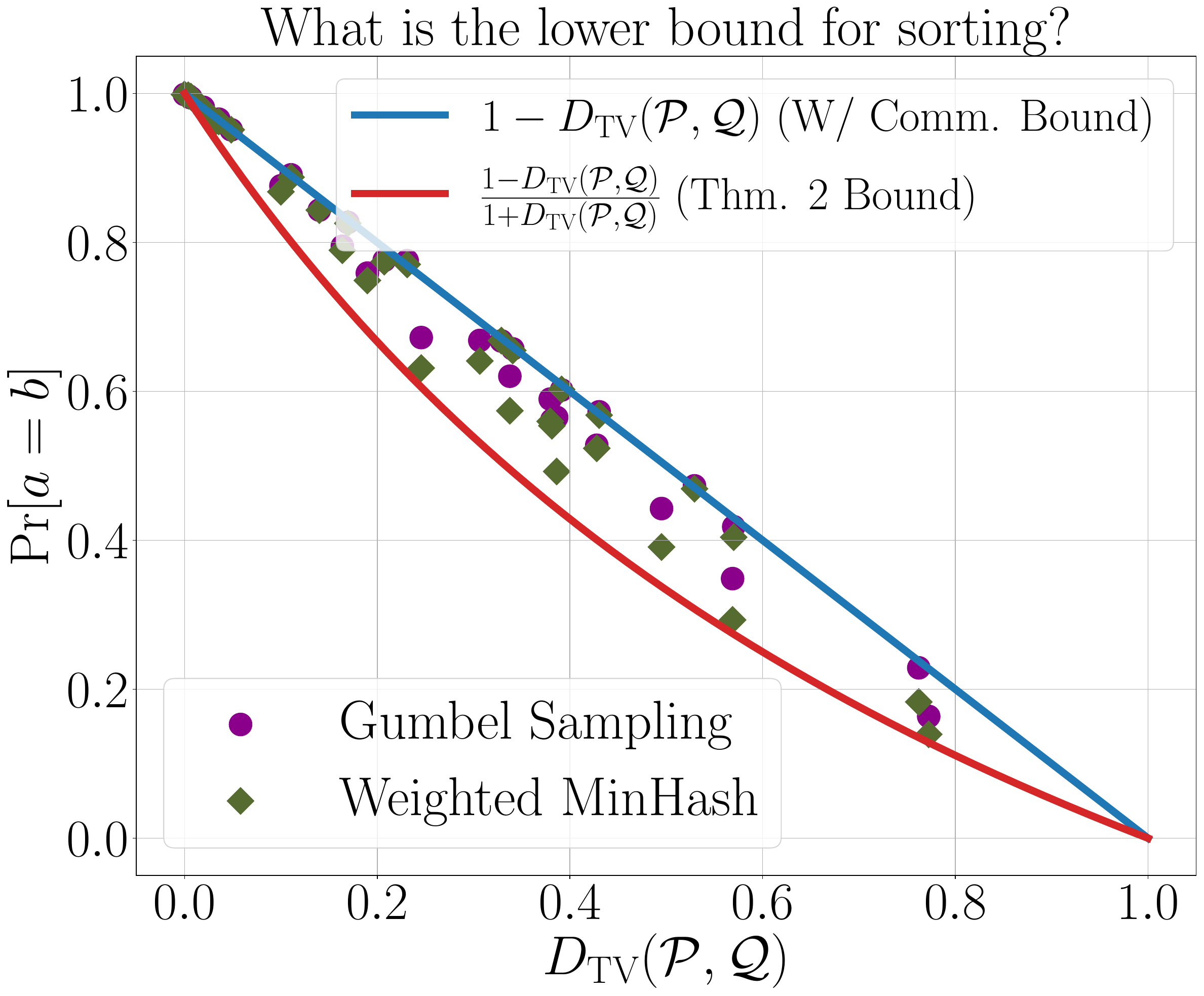}
    \end{subfigure}%
    \hfill
    \begin{subfigure}{.333\linewidth}
        \centering
        \includegraphics[width=\linewidth]{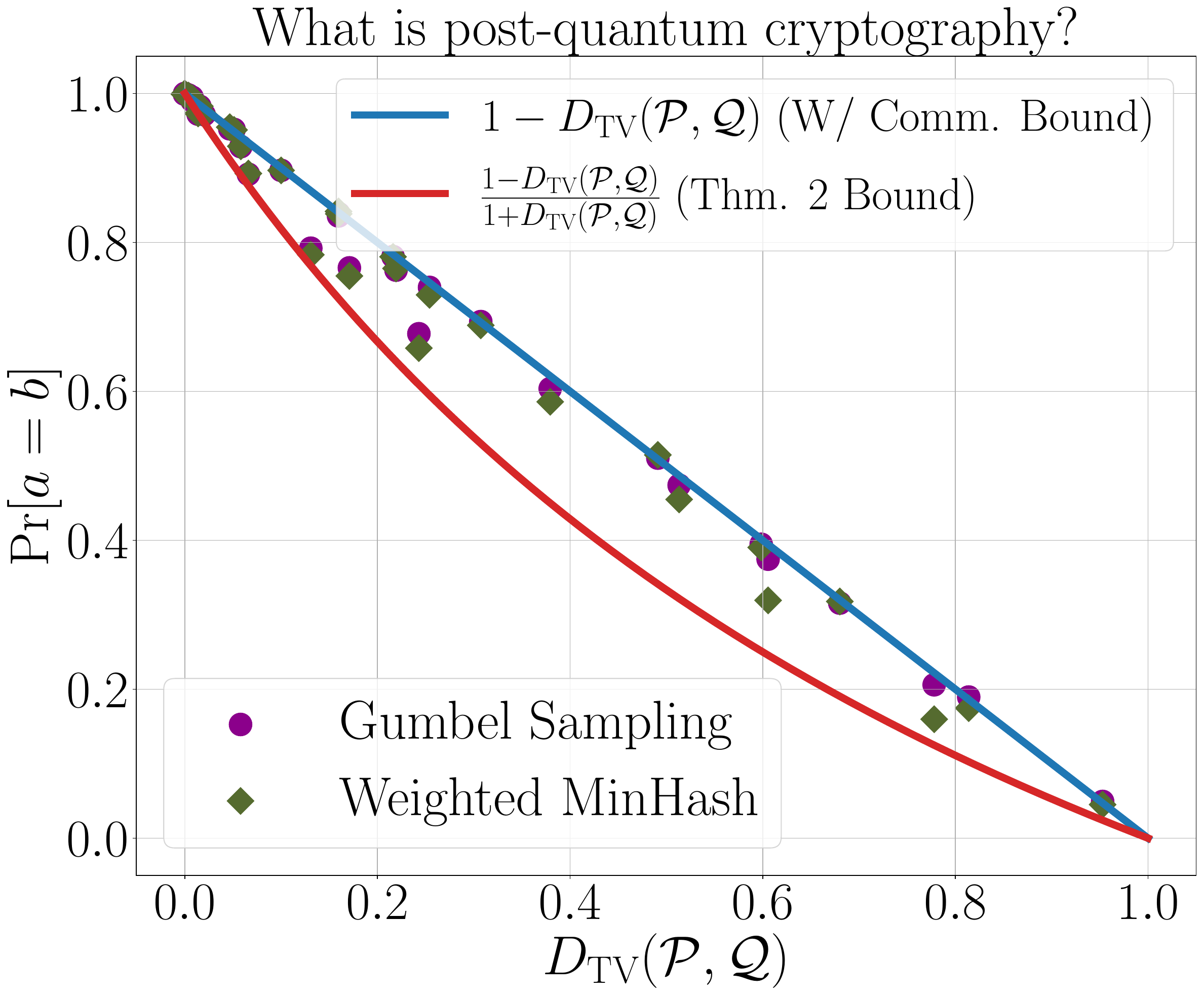}
    \end{subfigure}%
    \hfill
    \begin{subfigure}{.333\linewidth}
        \centering
    \includegraphics[width=\linewidth]{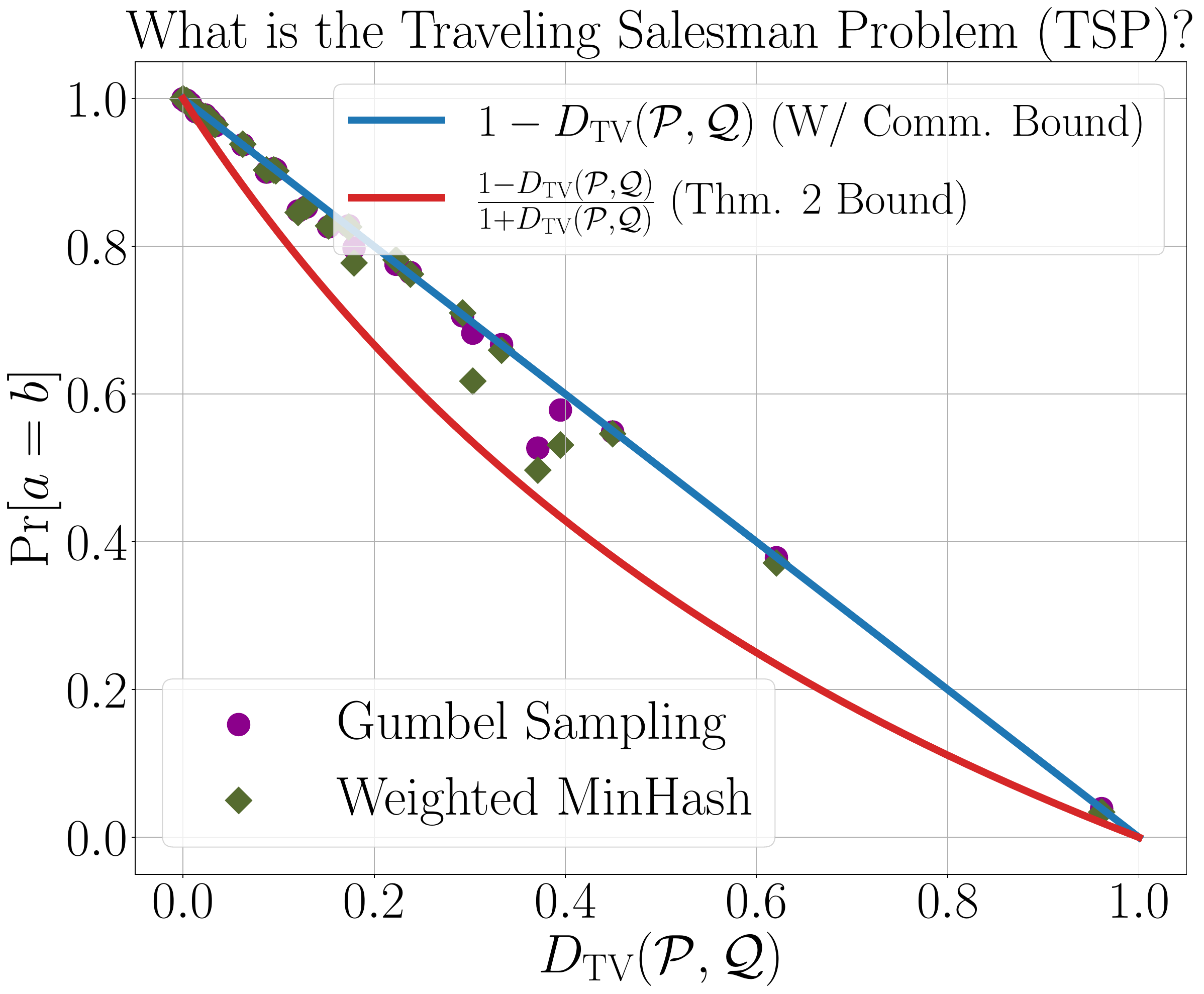}
    \end{subfigure}
    \caption{This plot illustrates the effectiveness of Gumbel and Weighted MinHash sampling for coupling samples. For each subplot, we obtain 32 pairs of distributions $(\mathcal{P},\mathcal{Q})$ by generating 32 response tokens for a given query using two different language models: the 27 billion parameter Gemma 27B model and the smaller Gemma 2B model. We sample $a\sim \mathcal{P}$ and $b\sim \mathcal{Q}$ 20,000 times using the no-communication protocols, and plot the empirical probability that $a=b$ vs. the total variation distance between $\mathcal{P}$ and $\mathcal{Q}$. For reference, we also plot the best possible probability that $a=b$, $1-D_{\tv}(\mathcal{P},\mathcal{Q})$, as well as our bound from \Cref{clm:ub}.}
    \label{fig:wmh_vs_gumbel}
\end{figure}

In \Cref{fig:wmh_vs_gumbel}, we also experimentally compare the effectiveness of Gumbel sampling and Weighted MinHash sampling for generating coordinated samples. We issue queries to the \texttt{gemma-2-27b-it} model, then generate a sequence of 32 response tokens using both the full model and the \texttt{gemma-2-2b-it} drafter model. For each token, we have a pair of distributions, $\mathcal{Q}$ (the large model's target distribution) and $\mathcal{P}$ (the drafter's approximate distribution). We generate samples $a\sim \mathcal{P}$ and $b\sim \mathcal{Q}$ from these distributions using the communication-free coupling protocols. We report the fraction of times that $a=b$ over $20,000$ repetitions using different random seeds. This results in 32 plotted points, per query, per method. Since the accuracy of the drafter model varies depending on where we are in the token generation sequence, we can see that the points corresponds to distributions with a fairly wide range of total variation distances. 

As we can see in the figure, Gumbel sampling achieves a higher collision probability than Weighted MinHash, across {all queries and tokens}, which is consistent with our \Cref{clm:gumbel}. Both methods typically outperform the worst-case bound of $\frac{1-D_{\tv}(\mathcal{P},\mathcal{Q})}{1+D_{\tv}(\mathcal{P},\mathcal{Q})}$ proven in \Cref{clm:ub}, often nearly matching the best possible collision probability of $1-D_{\tv}(\mathcal{P},\mathcal{Q})$.

\section{Coupling with Low-Communication}
\label{sec:lowcomm}
We conclude by considering a relaxed version of the communication-free coupling problem that allows for \emph{limited communication}. In particular, \Cref{clm:ub} proves that there is a gap between communication-free protocols and the communication-intensive protocol from Protocol \ref{prot:coupling}, where Alice sends her entire distribution to Bob. It is natural to ask what lies in between. Concretely, how many bits of communication are needed to match the optimal $1-D_{\tv}(\mathcal{P},\mathcal{Q})$ collision probability of Protocol \ref{prot:coupling}?

Our main result on this question is \Cref{clm:lowcomm}, which we restated below:
\lowcomm*

To prove \Cref{clm:lowcomm}, we first introduce an ``idealized'' communication protocol, Protocol \ref{prot:low_communication}, that proceeds in rounds. At each round, Alice and Bob communicate at most one index consisting of $O(\log n)$ bits, and one probability, i.e., a real-valued number in the interval $[0,1]$. We show that this protocol terminates with an expected constant number of rounds. We then show how to appropriately discretize Alice and Bob's distributions in a pre-processing step so that the probability can be communicated with $O(\log(n/\epsilon))$ bits.
The same discretization strategy can be used to give a concrete communication complexity bound for Protocol \ref{prot:coupling}, which naively requires communicating real-valued probabilities. The result would be a protocol requiring $O(n\log(n/\epsilon))$ bits of communication.
\Cref{clm:lowcomm} offers and exponential improvement on this baseline.

\begin{protocol}[ht]
    \caption{Low Communication Coupling}
    \label{prot:low_communication}
    Fix public random numbers $u_0, u_1, \ldots$ drawn uniformly from the interval $[0,1]$.\\
    \textbf{Protocol for Alice} (who has probability vector $\mathcal{P} = [p_1, \ldots, p_n]$):
    \begin{algorithmic}[1]
        \State Sample $a \sim \mathcal{P}$ and send $(a, p_a)$ to Bob.
        \State Await Bob's response.
            \If{Bob sends {\color{lightred} \textbf{reject}}}
                \For{$k=1,2,\ldots$}
                    \State Await $(j, w)$ from Bob.
                    \State If $p_j \geq u_k - w$, send {\color{lightred} \textbf{reject}} to Bob. Otherwise send {\color{lightblue} \bf approve} and break the loop.
                \EndFor
            \EndIf
            \State Return $a$.
    \end{algorithmic}
        \textbf{Protocol for Bob} (who has probability vector $\mathcal{Q} = [q_1, \ldots, q_n]$): 
    \begin{algorithmic}[1]
        \State Receive $(a, p_a)$ from Alice.
        \If{$u_0 \leq \min\left(\frac{q_a}{p_a}, 1\right)$}
            \State Send {\color{lightblue} \bf approve} to Alice.
            \State Return $b=a$.
        \Else
        \State Send {\color{lightred} \bf reject} to Alice.
        \For{$k=1,2,\ldots$}
            \State Find $j$ such that $\sum_{t=1}^{j-1}q_t \leq u_k  < \sum_{t=1}^{j} q_t$ and send $(j, \sum_{t=1}^{j-1} q_t)$ to Alice.
            \State Await Alice's response.
            \If{Alice sends {\color{lightblue} \bf approve}}
                \State Return $b=j$
            \EndIf
        \EndFor
        \EndIf
    \end{algorithmic}
\end{protocol}

\smallskip\noindent
\textbf{Idealized Protocol.} We begin by analyzing our idealized protocol,
Protocol \ref{prot:low_communication}. It is based on a modification of the optimal coupling method, Protocol \ref{prot:coupling}. While Protocol \ref{prot:coupling} naively requires communicating Alice's entire distribution, $\mathcal{P}$, to Bob, we can see that $\mathcal{P}$ only becomes necessary if Bob ``rejects'' the item $a$ that
he receives from Alice. In particular, upon rejectionm he must sample from a distribution $\mathcal{Q}'$ that depends on $\mathcal{P}$.

\begin{figure}[ht]
    \centering
    \includegraphics[width=0.8\linewidth]{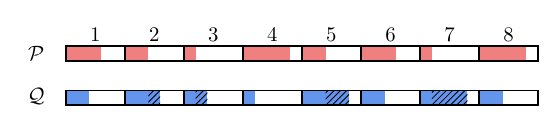}
    \caption{Two probability distributions, $\mathcal{P}$ (red) and $\mathcal{Q}$ (blue), are depicted in this diagram. The area where $q_i > p_i$ is highlighted with a hatch pattern. In the first round of Protocol \ref{prot:low_communication}, Alice samples an item proportional to $\mathcal{P}$. Using rejection sampling, Bob decides whether to accept or reject the item. If rejected, Bob attempts to draw a random number that ``hits'' the hatched regions where $q_i$ exceeds $p_i$.}
    \label{fig:distribution_difference_visualization}
\end{figure}

Protocol \ref{prot:low_communication} implements this ``post-rejection'' step in a more communication efficient way by using a dart-throwing approach similar to Weighted MinHash. Specifically, Bob attempts to sample from regions of the number line where  $q_j > p_j$, as illustrated by the hatched regions in \cref{fig:distribution_difference_visualization}. A uniform sample from these regions amounts to a sample from $\mathcal{Q}'$. To obtain such a sample, Bob throws a dart and, if it lands in the region $[j-1, j-1 + q_j]$ for some $j$, he sends the dart to Alice to verify that it also lies above  $j-1 + p_j$. If it does, then the dart has landed in the hatched region, so they terminate the process. If not, Bob tries again. We prove that the number of rounds required for the process to terminate is constant in expectation:

\begin{claim}\label{claim:exp_communication}
If Alice and Bob use Protocol \ref{prot:low_communication} to produce samples $a \sim \mathcal{P}$ and $b \sim \mathcal{Q}$, then:
\begin{align*}
    \Pr[a = b] = 1 - D_{\tv}(\mathcal{P}, \mathcal{Q}).
\end{align*}
Furthermore, the expected number of messages before the protocol terminates is at most $2$.
\end{claim}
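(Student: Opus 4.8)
The plan is to establish the two claims separately: first the correctness of the coupling (that $a\sim\mathcal{P}$, $b\sim\mathcal{Q}$, and $\Pr[a=b]=1-D_{\tv}(\mathcal{P},\mathcal{Q})$), and then the bound of $2$ on the expected number of messages. For correctness, I would argue that Protocol \ref{prot:low_communication} implements exactly the same joint distribution on $(a,b)$ as the optimal coupling, Protocol \ref{prot:coupling}, so that the desired collision probability follows immediately from the analysis already given in \Cref{sec:prelims}. Indeed, Alice samples $a\sim\mathcal{P}$ directly in her first step, and Bob accepts with probability $\min(1,q_a/p_a)$ since $u_0$ is uniform on $[0,1]$ — this is precisely Step 1 of Protocol \ref{prot:coupling}. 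The only thing left to check is that the ``post-rejection'' dart-throwing loop produces a sample distributed exactly as $\mathcal{Q}'$, where $q_j' = \max(0,q_j-p_j)/\sum_t \max(0,q_t-p_t)$.

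For that sub-step, the key observation is the ``dart'' picture of \Cref{fig:distribution_difference_visualization}: in round $k$, Bob picks the index $j$ with $\sum_{t<j} q_t \le u_k < \sum_{t\le j} q_t$ and sends $(j, \sum_{t<j}q_t)$; Alice then checks whether $p_j \ge u_k - \sum_{t<j} q_t$, i.e. whether the dart $u_k$, viewed as landing at height $u_k - \sum_{t<j} q_t \in [0,q_j)$ within block $j$, lies below $p_j$ (reject) or in $[p_j, q_j)$ (approve). So a given round ``approves'' $j$ exactly when the dart lands in the hatched sub-region of block $j$, which has length $\max(0,q_j-p_j)$, and conditioned on approval the returned index $j$ is chosen with probability proportional to $\max(0,q_j-p_j)$ — exactly $\mathcal{Q}'$. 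Since the $u_k$ are i.i.d. uniform on $[0,1]$ and independent across rounds, the index eventually returned is a clean sample from $\mathcal{Q}'$, independent of everything else. Combining: the pair $(a,b)$ has exactly the law produced by Protocol \ref{prot:coupling}, and by the computation in \Cref{sec:prelims}, $a\sim\mathcal{P}$, $b\sim\mathcal{Q}$, and $\Pr[a=b]=\sum_j \min(p_j,q_j)=1-D_{\tv}(\mathcal{P},\mathcal{Q})$ by \eqref{eq:sum_mins_dtv}.

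For the expected message count, I would condition on whether Bob accepts in the first round. Bob accepts with probability $\sum_a p_a \min(1,q_a/p_a) = \sum_a \min(p_a,q_a) = 1-D_{\tv}(\mathcal{P},\mathcal{Q})$; call this $1-\delta$ where $\delta = D_{\tv}(\mathcal{P},\mathcal{Q})$. On acceptance, the messages are: Alice$\to$Bob $(a,p_a)$, then Bob$\to$Alice \textbf{approve} — that is $2$ messages. On rejection, the messages are: $(a,p_a)$, then \textbf{reject}, then a sequence of rounds, each contributing a $(j,w)$ message from Bob and a reply from Alice; the loop stops the first time a dart lands in the hatched region. In round $k$, the probability the dart $u_k$ lands in the hatched region is $\sum_t \max(0,q_t-p_t) = \delta$ (independently each round), so the number of rounds is geometric with success probability $\delta$, with expectation $1/\delta$; each round is $2$ messages, and there were $2$ messages before the loop, for an expected $2 + 2/\delta$ messages conditioned on rejection. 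Hence the overall expectation is $(1-\delta)\cdot 2 + \delta\cdot(2 + 2/\delta) = 2 - 2\delta + 2\delta + 2 = 4$.

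\textbf{Main obstacle.} The arithmetic above gives $4$, not $2$, so the interesting part is reconciling this with the claimed bound of $2$ — either the intended ``message'' accounting is different (e.g. counting a round-trip as one message, or not counting the final \textbf{approve}, or counting only Alice$\to$Bob or only Bob$\to$Alice transmissions), or the claim is using a convention under which the per-round cost and the first exchange are each counted as one unit, giving $(1-\delta)\cdot 1 + \delta\cdot(1 + 1/\delta) = 2$. I would pin down the convention from the protocol statement — plausibly each ``message'' is one back-and-forth exchange (Bob sends $(j,w)$ and gets a yes/no), so the first exchange is $1$ message and the rejection branch costs $1 + \mathrm{Geom}(\delta)$ messages in expectation $=1+1/\delta$, yielding $(1-\delta)\cdot 1 + \delta(1+1/\delta) = 1-\delta+\delta+1 = 2$. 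Verifying which counting convention the authors intend, and checking that the geometric-rounds argument is airtight (in particular that conditioning on reaching round $k$ does not bias $u_k$, which holds by independence and the fact that earlier darts only failed), is the crux; the rest is the routine ``implements Protocol \ref{prot:coupling}'' verification.
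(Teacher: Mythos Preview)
Your proposal is correct and follows essentially the same route as the paper: establish correctness by showing the protocol realizes the optimal coupling of Protocol~\ref{prot:coupling} (the paper carries out the same computation directly rather than phrasing it as a reduction, but the content is identical), and then bound the expected cost by conditioning on accept/reject and using a geometric number of dart throws with success probability $D_{\tv}(\mathcal{P},\mathcal{Q})$. Your ``obstacle'' is well-spotted: the paper's proof in fact counts \emph{rounds} (back-and-forth exchanges), i.e.\ exactly the convention you isolate as yielding $(1-\delta)\cdot 1 + \delta\,(1+1/\delta)=2$, so your proposed resolution is the intended one.
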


\begin{proof}
Let $a$ and $b$ be the indices returned by Alice and Bob at the end of the protocol. We first prove that $a \sim \mathcal{P}$, $b \sim \mathcal{Q}$, and $\Pr[a = b] = 1 - D_{\tv}(\mathcal{P}, \mathcal{Q})$. The fact that $a\sim \mathcal{P}$ is immediate, since Alice starts by choosing $a\sim \mathcal{P}$ and then never changes $a$'s value. 
Alice's selection is communicated to Bob, who then decides to either accept or reject $a$. Bob accepts with probability $\min\left(\frac{q_a}{p_a}, 1\right)$. So, fixing an index $i$, we have:
\begin{align*}
\Pr[a = b = i] = p_i\cdot \min\left(\frac{q_i}{p_i}, 1\right) = \min(q_i, p_i).
\end{align*}
Summing over all $i$ we conclude that
\begin{align*}
\Pr[a = b] = \sum_{i=1}^{n} \Pr[a = b = i] = \sum_{i=1}^{n} \min(p_i, q_i) = 1 - D_{\tv}(\mathcal{P}, \mathcal{Q}).
\end{align*}
Now, if Bob rejects $a$ at Line 6, he must select another value $j \neq a$. The probability of selecting $j$ given the rejection of $a$ is:
\begin{align*}
\Pr[b = j \mid \text{Bob rejects $a$}] = \frac{q_j - \min(p_j, q_j)}{\sum_{k=1}^{n} q_k - \min(p_k, q_k)}. 
\end{align*}
The bottom of the fraction above is exactly the probability that Bob rejects at Line 6, so overall, $\Pr[b = j] = \Pr[b = j \land b = a] +  \Pr[b = j \land b \neq a] = \min(q_j,p_j) +  q_j -  \min(q_j,p_j)  = q_j$, as desired.

Next, we bound the round complexity of Protocol \ref{prot:low_communication}.
Bob's process involves a dart-throwing mechanism where each throw is aimed at hitting a region representative of the excess probability mass of $q_j$ over $p_j$. The chance of hitting the correct region per trial, shown as the hatched area in the \cref{fig:distribution_difference_visualization}, is equal to the total variation distance, $D_{\tv}(\mathcal{P}, \mathcal{Q})$. 

\begin{align*}
    \sum_{j=1}^{n} q_j - \min(p_j, q_j) = \sum_{j=1}^{n} q_j - \sum_{j=1}^{n} \min(p_j, q_j) = 1 - (1 - D_{\tv}(\mathcal{P}, \mathcal{Q})) = D_{\tv}(\mathcal{P}, \mathcal{Q}).
\end{align*}
Each throw is independent, and the expected number of throws needed to hit the target region is geometrically distributed with a success probability of $D_{\tv}(\mathcal{P}, \mathcal{Q})$. Therefore, the expected number of throws needed is $\frac{1}{D_{\tv}(\mathcal{P}, \mathcal{Q})}$. Bob's dart throwing procedure is only used if $a$ is rejected at Line 6 of Bob's protocol, which happens with probability $D_{\tv}(\mathcal{P}, \mathcal{Q})$.
So, the expected number of communication rounds between Alice and Bob, denoted by $\mathbb{E}[c]$, is:
\begin{align*}
    \mathbb{E}[c] = \Pr[a = b] \cdot 1 + \Pr[a \neq b] \cdot \frac{1}{D_{\tv}(\mathcal{P}, \mathcal{Q})} \leq 1 + D_{\tv}(\mathcal{P}, \mathcal{Q}) \cdot \frac{1}{D_{\tv}(\mathcal{P}, \mathcal{Q})} \leq 2.&\qedhere
\end{align*}
\end{proof}

With \Cref{claim:exp_communication} in place, we are almost ready to prove \Cref{clm:lowcomm}. The only issue is that Protocol \ref{prot:low_communication} must communicate probabilities, which could be arbitrarily precise real numbers, and thus require an unbounded number of bits to represent. 
To avoid this issue,  we need to discretize $\mathcal{P}$ and $\mathcal{Q}$ using \Cref{alg:rounding}, to form distributions $\mathcal{P}'$ and $\mathcal{Q}'$, where each probability is a multiple of $\frac{\epsilon}{n}$ for a given small $\epsilon$. This quantization allows us to communicate the probabilities using just $O(\log(n/\epsilon))$ bits.

\begin{algorithm}
\caption{Distribution Discretization}
\label{alg:rounding}
\begin{algorithmic}[1]
\Require Original distribution $\mathcal{P} = \{p_1, \ldots, p_n\}$, precision parameter $\epsilon$, and size $n$.
\Ensure Distribution $\mathcal{P}' = \{p_1', \ldots, p_n'\}$ where each $p_i'$ is a multiple of $\epsilon/n$.
\For{$p_i \in \mathcal{P}$}
    \State Set $p_i'$ equal to $p_i$ rounded down to the nearest multiple of $\frac{\epsilon}{n}$.
\EndFor
\State Set $r \gets (1 - \sum_{i=1}^n p_i')$.
\State Set $p_1' \gets p_1' + r$. \Comment{This step ensures that $\sum_{i=1}^n p_i' = 1$.}
\end{algorithmic}
\end{algorithm}

\begin{lemma}
\label{lemma:dtv_rounded}
    For any distributions $\mathcal{P}$ and $\mathcal{Q}$ over $\{1,\ldots, n\}$, if we apply \Cref{alg:rounding} with parameters $\epsilon,n$ to get distributions $\mathcal{P}'$ and $\mathcal{Q}'$, then we have:
    \begin{align*}
        D_{\tv}(\mathcal{P}, \mathcal{P}') &\leq \epsilon,  & D_{\tv}(\mathcal{Q}, \mathcal{Q}') \leq \epsilon, & & \text{and} & & D_{\tv}(\mathcal{P}', \mathcal{Q}') &\leq D_{\tv}(\mathcal{P}, \mathcal{Q}) + 2\epsilon.
    \end{align*}
\end{lemma}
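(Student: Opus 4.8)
The plan is to track, coordinate by coordinate, exactly how \Cref{alg:rounding} moves probability mass, prove the two single-distribution bounds $D_{\tv}(\mathcal{P},\mathcal{P}')\le\epsilon$ and $D_{\tv}(\mathcal{Q},\mathcal{Q}')\le\epsilon$ by a direct $\ell_1$ calculation, and then obtain the third bound for free from the triangle inequality for total variation distance.

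First I would name the intermediate (un-normalized) vector produced by Line 2: let $\tilde{p}_i$ be $p_i$ rounded down to the nearest multiple of $\epsilon/n$, so that $0 \le p_i - \tilde{p}_i < \epsilon/n$ for every $i$, and hence the residual mass $r \eqdef 1 - \sum_{i=1}^n \tilde{p}_i = \sum_{i=1}^n (p_i - \tilde{p}_i)$ lies in $[0,\epsilon)$. The output is $p_1' = \tilde{p}_1 + r$ and $p_i' = \tilde{p}_i$ for $i\ge 2$. The one observation that does the work is that coordinate $1$ moves \emph{up} by precisely the amount the other coordinates moved down: $p_1' - p_1 = r - (p_1-\tilde{p}_1) = \sum_{i\ge 2}(p_i-\tilde{p}_i) \ge 0$. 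Consequently $\sum_{i=1}^n |p_i - p_i'| = \sum_{i\ge 2}(p_i-\tilde{p}_i) + \sum_{i\ge 2}(p_i-\tilde{p}_i) = 2\sum_{i\ge 2}(p_i-\tilde{p}_i) \le 2r < 2\epsilon$, and by \eqref{eq:dtv_equiv} this gives $D_{\tv}(\mathcal{P},\mathcal{P}') = \tfrac12\sum_{i=1}^n |p_i-p_i'| < \epsilon$. The identical argument applied to $\mathcal{Q}$ yields $D_{\tv}(\mathcal{Q},\mathcal{Q}') < \epsilon$.

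For the last inequality I would simply invoke the triangle inequality for $D_{\tv}$ (valid because on the simplex $D_{\tv}$ equals $\tfrac12\|\cdot\|_1$, which is a metric):
\[
D_{\tv}(\mathcal{P}',\mathcal{Q}') \le D_{\tv}(\mathcal{P}',\mathcal{P}) + D_{\tv}(\mathcal{P},\mathcal{Q}) + D_{\tv}(\mathcal{Q},\mathcal{Q}') \le D_{\tv}(\mathcal{P},\mathcal{Q}) + 2\epsilon .
\]
I do not expect a genuine obstacle; the only place needing a moment's care is to avoid the crude bound $|p_1-p_1'| \le r + (p_1-\tilde{p}_1)$, which is slightly too lossy to land the clean $\le\epsilon$ — one wants instead the exact cancellation $p_1'-p_1 = \sum_{i\ge 2}(p_i-\tilde{p}_i)$, which reflects that the mass redistributed to coordinate $1$ is exactly the mass removed from coordinates $2,\dots,n$. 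The degenerate cases $r = 0$ (so $\mathcal{P}' = \mathcal{P}$) and $D_{\tv}(\mathcal{P},\mathcal{Q}) = 0$ are handled automatically by the same inequalities.
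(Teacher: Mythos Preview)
Your proof is correct and follows essentially the same approach as the paper: bound the residual $r\le\epsilon$, use it to control $\tfrac12\sum_i|p_i-p_i'|$, and finish with the triangle inequality. Your write-up is in fact more explicit than the paper's, which simply asserts $\tfrac12\sum_i|p_i-p_i'|\le\tfrac12\cdot 2r$ without unpacking the coordinate-$1$ cancellation.

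One small inaccuracy in your closing remark: the ``crude'' bound $|p_1-p_1'|\le r+(p_1-\tilde p_1)$ is \emph{not} too lossy. Summing it with $\sum_{i\ge2}(p_i-\tilde p_i)$ gives $r+\sum_{i\ge1}(p_i-\tilde p_i)=2r$, so $D_{\tv}(\mathcal P,\mathcal P')\le r\le\epsilon$ still follows. Your exact cancellation is cleaner but not needed for the stated bound.
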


\begin{proof}
First, consider $\mathcal{P}$. Let $r$ be as in Line 3 of \Cref{alg:rounding}. We have:
\begin{align*}
    r = 1 - \sum_{i=1}^n p_i' 
    \leq 1 - \sum_{i=1}^n \left(\frac{p_i}{\frac{\epsilon}{n}} - 1\right) \cdot \frac{\epsilon}{n} 
    = 1 - \sum_{i=1}^n \left(p_i - \frac{\epsilon}{n}\right)
    = \sum_{i=1}^n \frac{\epsilon}{n} = \epsilon.
\end{align*}
We can then bound the total variation distance between $\mathcal{P}$ and $\mathcal{P}'$ as:
\begin{align*}
    D_{\tv}(\mathcal{P}, \mathcal{P}') &=  \frac{1}{2}\sum_{i=1}^n |p_i-p'_i|\leq \frac{1}{2}\cdot 2r \leq \epsilon.
\end{align*}
We can similarly demonstrate that $D_{\tv}(\mathcal{Q}, \mathcal{Q}') < \epsilon$. Finally, the last bound follows from triangle inequality:
\begin{align*}
    D_{\tv}(\mathcal{P}', \mathcal{Q}') 
    &\leq D_{\tv}(\mathcal{P}', \mathcal{P}) + D_{\tv}(\mathcal{P}, \mathcal{Q}) + D_{\tv}(\mathcal{Q}, \mathcal{Q}')
    \leq D_{\tv}(\mathcal{P}, \mathcal{Q}) + 2\epsilon&\qedhere
\end{align*}
\end{proof}

Our final approach will be for Alice and Bob to run 
Protocol \ref{prot:low_communication} on the distributions $\mathcal{P}'$ and $\mathcal{Q}'$ instead of $\mathcal{P}$ and $\mathcal{Q}$. This produces samples $a\sim \mathcal{P}', b\sim \mathcal{Q}'$ instead of $a\sim \mathcal{P}, b\sim \mathcal{Q}$, as required. To correct the distributions of the samples, Alice and Bob simply use an optimal coupling between $\mathcal{P}$ and $\mathcal{P}'$ and between $\mathcal{Q}$ and $\mathcal{Q}'$. These couplings can be computed locally, without any communication. 


\begin{proof}[Proof of \Cref{clm:lowcomm}]
Alice and Bob round their distributions using \Cref{alg:rounding} with parameter $\epsilon/4$ to obtain $\mathcal{P}'$ and $\mathcal{Q}'$.
They then run Protocol \ref{prot:low_communication} with $\mathcal{P}'$ and $\mathcal{Q}'$ to obtain $a'\sim \mathcal{P}'$ and $b'\sim \mathcal{Q}'$ such that $\Pr[a'=b'] \geq 1 - D_{\tv}(\mathcal{P}',\mathcal{Q}')$. They each separately apply Protocol \ref{prot:coupling} to obtain samples $a\sim \mathcal{P}$ and $b\sim \mathcal{Q}$ such that $\Pr[a=a'] \geq 1-D_{\tv}(\mathcal{P},\mathcal{P}')$ and $\Pr[b=b'] \geq 1-D_{\tv}(\mathcal{Q},\mathcal{Q}')$. Overall, we have:
\begin{align*}
    \Pr[a = b] &\geq \Pr[a = a' \text{ and } a' = b' \text{ and } b = b'] \\ 
    &\geq 1 - D_{\tv}(\mathcal{P}, \mathcal{P}') -D_{\tv}(\mathcal{Q}', \mathcal{P}') - D_{\tv}(\mathcal{Q}, \mathcal{Q}') \tag{by a union bound}\\
    &\geq 1- D_{\tv}(\mathcal{P}, \mathcal{Q}) - 4\cdot\epsilon/4 \tag{by \Cref{lemma:dtv_rounded}}\\
    &= 1 - D_{\tv}(\mathcal{P}, \mathcal{Q}) - \epsilon.
\end{align*}
The total communication cost is that of running Protocol \ref{prot:low_communication} with $\mathcal{P}'$ and $\mathcal{Q}'$. Each round of this protocol requires communicating an index, which takes $O(\log n)$ bits, and a probability, which takes $O(\log(n/\epsilon))$ bits (since all probabilities are integer multiples of $\epsilon/4n$).
By \Cref{claim:exp_communication}, the protocol terminates after a constant number of rounds in expectation, so in total, we need $O(\log(n/\epsilon))$ bits of communication, in expectation.
\end{proof}

\section*{Acknowledgement} This work was partially support by NSF Grant $\# 2106888$. Authors thank Mert Cemri, Nived Rajaraman, and Cheuk Ting Li for pointing out related works.

\bibliographystyle{plainnat}
\bibliography{main}

\begin{thebibliography}{44}
\providecommand{\natexlab}[1]{#1}
\providecommand{\url}[1]{\texttt{#1}}
\expandafter\ifx\csname urlstyle\endcsname\relax
  \providecommand{\doi}[1]{doi: #1}\else
  \providecommand{\doi}{doi: \begingroup \urlstyle{rm}\Url}\fi

\bibitem[Angel and Spinka(2021)]{angel2021pairwiseoptimalcouplingmultiple}
Omer Angel and Yinon Spinka.
\newblock Pairwise optimal coupling of multiple random variables.
\newblock \emph{\arXiv{1903.00632}}, 2021.

\bibitem[Bavarian et~al.(2020)Bavarian, Ghazi, Haramaty, Kamath, Rivest, and Sudan]{BavarianGhaziHaramaty:2020}
Mohammad Bavarian, Badih Ghazi, Elad Haramaty, Pritish Kamath, Ronald~L. Rivest, and Madhu Sudan.
\newblock Optimality of correlated sampling strategies.
\newblock \emph{Theory of Computing}, 16\penalty0 (12):\penalty0 1--18, 2020.

\bibitem[Bessa et~al.(2023)Bessa, Daliri, Freire, Musco, Musco, Santos, and Zhang]{BessaDaliriFreire:2023}
Aline Bessa, Majid Daliri, Juliana Freire, Cameron Musco, Christopher Musco, A\'{e}cio Santos, and Haoxiang Zhang.
\newblock Weighted minwise hashing beats linear sketching for inner product estimation.
\newblock In \emph{\PODS{2023}}, 2023.

\bibitem[Beyer et~al.(2007)Beyer, Haas, Reinwald, Sismanis, and Gemulla]{BeyerHaasReinwald:2007}
Kevin Beyer, Peter~J. Haas, Berthold Reinwald, Yannis Sismanis, and Rainer Gemulla.
\newblock On synopses for distinct-value estimation under multiset operations.
\newblock In \emph{\SIGMOD{2007}}, pages 199--210, 2007.

\bibitem[Broder(1997)]{Broder:1997}
Andrei~Z. Broder.
\newblock On the resemblance and containment of documents.
\newblock In \emph{Proceedings. Compression and Complexity of SEQUENCES}, pages 21--29, 1997.

\bibitem[Broder et~al.(1998)Broder, Charikar, Frieze, and Mitzenmacher]{BroderCharikarFrieze:1998}
Andrei~Z. Broder, Moses Charikar, Alan~M. Frieze, and Michael Mitzenmacher.
\newblock Min-wise independent permutations (extended abstract).
\newblock In \emph{\STOC{1998}}, pages 327--336, 1998.

\bibitem[Chen et~al.(2023)Chen, Borgeaud, Irving, Lespiau, Sifre, and Jumper]{ChenBorgeaudIrving:2023}
Charlie Chen, Sebastian Borgeaud, Geoffrey Irving, Jean-Baptiste Lespiau, Laurent Sifre, and John Jumper.
\newblock Accelerating large language model decoding with speculative sampling.
\newblock \emph{\arXiv{2302.01318}}, 2023.

\bibitem[Christiani(2020)]{Christiani:2020}
Tobias Christiani.
\newblock {DartMinHash}: Fast sketching for weighted sets.
\newblock \emph{\arXiv{2005.11547}}, 2020.

\bibitem[Cohen(1997)]{Cohen:1997}
Edith Cohen.
\newblock Size-estimation framework with applications to transitive closure and reachability.
\newblock \emph{J. Comput. Syst. Sci.}, 55\penalty0 (3):\penalty0 441--453, 1997.

\bibitem[Cohen(2015)]{Cohen:2015}
Edith Cohen.
\newblock Stream sampling for frequency cap statistics.
\newblock In \emph{\KDD{2015}}, pages 159--168, 2015.

\bibitem[Cohen(2023)]{Cohen:2023}
Edith Cohen.
\newblock Sampling big ideas in query optimization.
\newblock In \emph{\PODS{2023}}, 2023.

\bibitem[Cohen and Kaplan(2007)]{CohenKaplan:2007}
Edith Cohen and Haim Kaplan.
\newblock Summarizing data using bottom-k sketches.
\newblock In \emph{\PODC{2007}}, pages 225--234, 2007.

\bibitem[Daliri et~al.(2024{\natexlab{a}})Daliri, Freire, Musco, Santos, and Zhang]{DaliriFreireMusco:2024}
Majid Daliri, Juliana Freire, Christopher Musco, A{\'e}cio Santos, and Haoxiang Zhang.
\newblock Sampling methods for inner product sketching.
\newblock \emph{Proc. VLDB Endow.}, 2024{\natexlab{a}}.

\bibitem[Daliri et~al.(2024{\natexlab{b}})Daliri, Freire, Musco, Santos, and Zhang]{DaliriFreireMusco:2024b}
Majid Daliri, Juliana Freire, Christopher Musco, A{\'e}cio Santos, and Haoxiang Zhang.
\newblock Simple analysis of priority sampling.
\newblock \emph{\SOSA{2024}}, 2024{\natexlab{b}}.

\bibitem[Duffield et~al.(2004)Duffield, Lund, and Thorup]{DuffieldLundThorup:2004}
Nick Duffield, Carsten Lund, and Mikkel Thorup.
\newblock Flow sampling under hard resource constraints.
\newblock In \emph{Proceedings of the Joint International Conference on Measurement and Modeling of Computer Systems (SIGMETRICS)}, pages 85--96, 2004.

\bibitem[Duffield et~al.(2005)Duffield, Lund, and Thorup]{DuffieldLundThorup:2005}
Nick Duffield, Carsten Lund, and Mikkel Thorup.
\newblock Learn more, sample less: control of volume and variance in network measurement.
\newblock \emph{IEEE Transactions on Information Theory}, 51\penalty0 (5):\penalty0 1756--1775, 2005.

\bibitem[Efraimidis and Spirakis(2006)]{EfraimidisSpirakis:2006}
Pavlos~S. Efraimidis and Paul~G. Spirakis.
\newblock Weighted random sampling with a reservoir.
\newblock \emph{Information Processing Letters}, 97\penalty0 (5):\penalty0 181--185, 2006.

\bibitem[Estan and Naughton(2006)]{EstanNaughton:2006}
C.~Estan and J.F. Naughton.
\newblock End-biased samples for join cardinality estimation.
\newblock In \emph{\ICDE{2006}}, 2006.

\bibitem[Flajolet(1990)]{Flajolet:1990}
Philippe Flajolet.
\newblock On adaptive sampling.
\newblock \emph{Computing}, 43\penalty0 (4):\penalty0 391--400, 1990.

\bibitem[{Gemini Team, Google}(2023)]{team2023gemini}
{Gemini Team, Google}.
\newblock Gemini: a family of highly capable multimodal models.
\newblock \emph{\arXiv{2312.11805}}, 2023.

\bibitem[{Gemma Team, Google}(2024)]{gemmateam2024gemma2improvingopen}
{Gemma Team, Google}.
\newblock Gemma 2: Improving open language models at a practical size.
\newblock \emph{\arXiv{2408.00118}}, 2024.

\bibitem[Gumbel(1935)]{gumbel1935valeurs}
Emil~Julius Gumbel.
\newblock Les valeurs extr{\^e}mes des distributions statistiques.
\newblock In \emph{Annales de l'institut Henri Poincar{\'e}}, volume~5, pages 115--158, 1935.

\bibitem[Haeupler et~al.(2014)Haeupler, Manasse, and Talwar]{HaeuplerManasseTalwar:2014}
Bernhard Haeupler, Mark Manasse, and Kunal Talwar.
\newblock Consistent weighted sampling made fast, small, and easy.
\newblock \emph{\arXiv{1410.4266}}, 2014.

\bibitem[Hoffmann et~al.(2022)Hoffmann, Borgeaud, Mensch, Buchatskaya, Cai, Rutherford, de~Las~Casas, Hendricks, Welbl, Clark, Hennigan, Noland, Millican, van~den Driessche, Damoc, Guy, Osindero, Simonyan, Elsen, Vinyals, Rae, and Sifre]{HoffmannBorgeaudMensch:2022}
Jordan Hoffmann, Sebastian Borgeaud, Arthur Mensch, Elena Buchatskaya, Trevor Cai, Eliza Rutherford, Diego de~Las~Casas, Lisa~Anne Hendricks, Johannes Welbl, Aidan Clark, Tom Hennigan, Eric Noland, Katie Millican, George van~den Driessche, Bogdan Damoc, Aurelia Guy, Simon Osindero, Karen Simonyan, Erich Elsen, Oriol Vinyals, Jack~W. Rae, and Laurent Sifre.
\newblock Training compute-optimal large language models.
\newblock In \emph{\NIPS{2022}}, 2022.

\bibitem[Holenstein(2009)]{Holenstein:2009}
Thomas Holenstein.
\newblock Parallel repetition: Simplification and the no-signaling case.
\newblock \emph{Theory of Computing}, 5\penalty0 (8):\penalty0 141--172, 2009.

\bibitem[Huijben et~al.(2023)Huijben, Kool, Paulus, and van Sloun]{HuijbenKoolPaulus:2023}
Iris A.~M. Huijben, Wouter Kool, Max~B. Paulus, and Ruud~G. van Sloun.
\newblock A review of the {G}umbel-max trick and its extensions for discrete stochasticity in machine learning.
\newblock \emph{IEEE Transactions on Pattern Analysis and Machine Intelligence}, 45\penalty0 (02):\penalty0 1353--1371, 2023.

\bibitem[Ioffe(2010)]{Ioffe:2010}
Sergey Ioffe.
\newblock Improved consistent sampling, weighted minhash and l1 sketching.
\newblock In \emph{\ICDM{2010}}, pages 246--255, 2010.

\bibitem[Kleinberg and Tardos(2002)]{KleinbergTardos:2002}
Jon Kleinberg and \'{E}va Tardos.
\newblock Approximation algorithms for classification problems with pairwise relationships: metric labeling and markov random fields.
\newblock \emph{J. ACM}, 49\penalty0 (5):\penalty0 616–639, 2002.

\bibitem[Kool et~al.(2019)Kool, Van~Hoof, and Welling]{kool2019stochastic}
Wouter Kool, Herke Van~Hoof, and Max Welling.
\newblock Stochastic beams and where to find them: The gumbel-top-k trick for sampling sequences without replacement.
\newblock In \emph{\ICML{2019}}, pages 3499--3508. PMLR, 2019.

\bibitem[Leviathan et~al.(2023)Leviathan, Kalman, and Matias]{LeviathanKalmanMatias:2023}
Yaniv Leviathan, Matan Kalman, and Yossi Matias.
\newblock Fast inference from transformers via speculative decoding.
\newblock In \emph{\ICML{2023}}, 2023.

\bibitem[Li and Anantharam(2019)]{li2019pairwisemultimarginaloptimaltransport}
Cheuk~Ting Li and Venkat Anantharam.
\newblock Pairwise multi-marginal optimal transport and embedding for earth mover's distance.
\newblock \emph{\arXiv{1908.01388}}, 2019.

\bibitem[Li and Anantharam(2021)]{li2021unified}
Cheuk~Ting Li and Venkat Anantharam.
\newblock A unified framework for one-shot achievability via the poisson matching lemma.
\newblock \emph{IEEE Transactions on Information Theory}, 67\penalty0 (5):\penalty0 2624--2651, 2021.

\bibitem[Li(2017)]{Li:2017}
Ping Li.
\newblock Linearized {GMM} kernels and normalized random fourier features.
\newblock In \emph{\KDD{2017}}, pages 315--324, 2017.

\bibitem[Li et~al.(2006)Li, Church, and Hastie]{LiChurchHastie:2006}
Ping Li, Kenneth Church, and Trevor Hastie.
\newblock Conditional random sampling: A sketch-based sampling technique for sparse data.
\newblock In \emph{\NIPS{2006}}, volume~19, 2006.

\bibitem[Maddison et~al.(2014)Maddison, Tarlow, and Minka]{MaddisonTarlowMinka:2014}
Chris~J. Maddison, Daniel Tarlow, and Tom Minka.
\newblock {A*} sampling.
\newblock In \emph{\NIPS{2014}}, pages 3086--3094, 2014.

\bibitem[Manasse et~al.(2010)Manasse, McSherry, and Talwar]{ManasseMcSherryTalwar:2010}
Mark Manasse, Frank McSherry, and Kunal Talwar.
\newblock Consistent weighted sampling.
\newblock Technical Report MSR-TR-2010-73, Microsoft Research, 2010.

\bibitem[OpenAI(2023)]{achiam2023gpt}
OpenAI.
\newblock {GPT-4} technical report.
\newblock \emph{\arXiv{2303.08774}}, 2023.

\bibitem[Ros{\'e}n(1997)]{Rosen:1997}
Bengt Ros{\'e}n.
\newblock Asymptotic theory for order sampling.
\newblock \emph{Journal of Statistical Planning and Inference}, 62\penalty0 (2):\penalty0 135--158, 1997.

\bibitem[Shrivastava(2016)]{Shrivastava:2016}
Anshumali Shrivastava.
\newblock Simple and efficient weighted minwise hashing.
\newblock In \emph{\NIPS{2016}}, 2016.

\bibitem[Sun et~al.(2023)Sun, Suresh, Ro, Beirami, Jain, and Yu]{SunSureshRo:2023}
Ziteng Sun, Ananda~Theertha Suresh, Jae~Hun Ro, Ahmad Beirami, Himanshu Jain, and Felix Yu.
\newblock {SpecTr}: Fast speculative decoding via optimal transport.
\newblock In \emph{\NIPS{2023}}, 2023.

\bibitem[Touvron et~al.(2023)Touvron, Lavril, Izacard, Martinet, Lachaux, Lacroix, Rozi{\`e}re, Goyal, Hambro, Azhar, et~al.]{touvron2023llama}
Hugo Touvron, Thibaut Lavril, Gautier Izacard, Xavier Martinet, Marie-Anne Lachaux, Timoth{\'e}e Lacroix, Baptiste Rozi{\`e}re, Naman Goyal, Eric Hambro, Faisal Azhar, et~al.
\newblock Llama: Open and efficient foundation language models.
\newblock \emph{\arXiv{2302.13971}}, 2023.

\bibitem[Wu et~al.(2020)Wu, Li, Chen, Gao, and Zhang]{WuLiChen:2020}
Wei Wu, Bin Li, Ling Chen, Junbin Gao, and Chengqi Zhang.
\newblock A review for weighted {M}in{H}ash algorithms.
\newblock \emph{IEEE Trans. Knowl. Data Eng.}, 2020.

\bibitem[Wu(2020)]{wu2020information}
Yihong Wu.
\newblock Information-theoretic methods for high-dimensional statistics.
\newblock \emph{Lecture notes, Yale University, New Haven, CT}, 2020.

\bibitem[Zhou et~al.(2024)Zhou, Ning, Hong, Fu, Xu, Li, Lou, Wang, Yuan, Li, et~al.]{zhou2024survey}
Zixuan Zhou, Xuefei Ning, Ke~Hong, Tianyu Fu, Jiaming Xu, Shiyao Li, Yuming Lou, Luning Wang, Zhihang Yuan, Xiuhong Li, et~al.
\newblock A survey on efficient inference for large language models.
\newblock \emph{\arXiv{2404.14294}}, 2024.

\end{thebibliography}

\end{document}